\documentclass[11pt]{article}

\usepackage[T1]{fontenc}
\usepackage{lmodern}
\usepackage{microtype}
\usepackage[margin=1in]{geometry}
\usepackage{amsmath,amssymb,amsthm,mathtools}
\numberwithin{equation}{section}
\usepackage{bbm}
\usepackage{aliascnt}
\usepackage{enumitem}
\usepackage{xcolor}
\usepackage[normalem]{ulem}
\usepackage{url}
\usepackage[colorlinks=true,linkcolor=blue!48!black,citecolor=green!35!black,urlcolor=blue!55!black]{hyperref}

\hypersetup{
  pdftitle={Improved Lower Bounds for Decomposable Randomized Encoding},
  pdfauthor={Justin Holmgren and Kewen Wu},
  pdfsubject={Lower bounds for decomposable randomized encodings of OR and general Boolean functions}
}

\setlist[itemize]{leftmargin=1.7em,itemsep=0.2em,topsep=0.35em}
\setlist[enumerate]{leftmargin=1.9em,itemsep=0.2em,topsep=0.35em}
\allowdisplaybreaks

\newaliascnt{lemma}{theorem}
\newtheorem{lemma}[lemma]{Lemma}
\aliascntresetthe{lemma}
\newaliascnt{proposition}{theorem}

\aliascntresetthe{proposition}
\newaliascnt{corollary}{theorem}
\newtheorem{corollary}[corollary]{Corollary}
\aliascntresetthe{corollary}
\newaliascnt{claim}{theorem}

\aliascntresetthe{claim}
\newaliascnt{conjecture}{theorem}
\newtheorem{conjecture}[conjecture]{Conjecture}
\aliascntresetthe{conjecture}
\theoremstyle{definition}
\newaliascnt{definition}{theorem}
\newtheorem{definition}[definition]{Definition}
\aliascntresetthe{definition}
\newaliascnt{construction}{theorem}

\aliascntresetthe{construction}
\newaliascnt{example}{theorem}

\aliascntresetthe{example}
\theoremstyle{remark}
\newaliascnt{remark}{theorem}
\newtheorem{remark}[remark]{Remark}
\aliascntresetthe{remark}
\newaliascnt{question}{theorem}

\aliascntresetthe{question}
\usepackage{thm-restate}
\usepackage[capitalise,nameinlink,noabbrev]{cleveref}
\crefname{definition}{definition}{definitions}
\Crefname{definition}{Definition}{Definitions}
\crefname{construction}{construction}{constructions}
\Crefname{construction}{Construction}{Constructions}
\crefname{claim}{claim}{claims}
\Crefname{claim}{Claim}{Claims}
\crefname{conjecture}{conjecture}{conjectures}
\Crefname{conjecture}{Conjecture}{Conjectures}
\crefname{example}{example}{examples}
\Crefname{example}{Example}{Examples}
\crefname{question}{open question}{open questions}
\Crefname{question}{Open Question}{Open Questions}

\DeclareMathOperator*{\E}{\mathbb E}
\renewcommand{\Pr}{\operatorname*{\mathbf{Pr}}}
\newcommand{\lcm}{\operatorname{lcm}}

\newcommand{\vabs}[1]{\left\| #1 \right\|}

\newcommand{\pbra}[1]{\left( #1 \right)}
\newcommand{\sbra}[1]{\left[ #1 \right]}
\newcommand{\cbra}[1]{\left\{ #1 \right\}}

\newcommand{\bin}{\{0,1\}}
\newcommand{\indicator}{\mathbbm{1}}
\newcommand{\TVdist}{d_{\mathsf{TV}}}

\newcommand{\OR}{\mathsf{OR}}
\DeclareMathOperator{\DRE}{DRE}
\newcommand{\Pyes}{\mathcal{P}_{\textsf{yes}}}
\newcommand{\Pno}{\mathcal{P}_{\textsf{no}}}

\newcommand{\poly}{\mathsf{poly}}
\newcommand{\AC}{\mathsf{AC}}
\newcommand{\NC}{\mathsf{NC}}

\newcommand{\Nbb}{\mathbb{N}}
\newcommand{\hsf}{\mathsf{h}}
\newcommand{\Rsf}{\mathsf{R}}
\newcommand{\Psf}{\mathsf{P}}
\newcommand{\Acal}{\mathcal{A}}
\newcommand{\Ecal}{\mathcal{E}}
\newcommand{\Hcal}{\mathcal{H}}
\newcommand{\Jcal}{\mathcal{J}}

\newcommand{\textdef}[1]{\emph{#1}}

\title{Improved lower bounds for decomposable randomized encoding}
\author{Justin Holmgren\thanks{Email: \texttt{justin.holmgren@gmail.com}. Research conducted in part while employed at NTT Research.} \and Kewen Wu\thanks{Caltech. Email: \texttt{shlw\_kevin@hotmail.com}. Research conducted in part while interning at NTT Research.}}
\date{}
\begin{document}
\maketitle

\begin{abstract}
  A decomposable randomized encoding (DRE) for a function $f$ allows $n$ parties, using shared randomness, to encode their individual inputs locally so that the collection of encodings reveals $f(x_1,\ldots,x_n)$ and nothing else.  DREs are widely used in efficient multiparty computation. Their main complexity measure is \emph{size}, the total bit length of the local encodings. Yet the optimal DRE size remains poorly understood even for the $n$-bit OR function.

  We prove the first superlinear lower bound for OR and, more generally, for every non-periodic symmetric function.  Under an additional symmetry assumption, we prove a sharp $\Omega(n\log n)$ lower bound for OR, matching the classic construction of Feige, Kilian, and Naor (STOC 1994).

  We also prove the first $\Omega(n^2)$ lower bound on DRE size for non-explicit Boolean functions.
\end{abstract}

\tableofcontents
\clearpage

\section{Introduction}\label{sec:introduction}

Randomized encodings, introduced by Applebaum, Ishai, and Kushilevitz~\cite{AIK06}, are a beautiful relaxation of deterministic computation.  The notion, of which there are several flavors, is that
\begin{definition}[Informal]
    \label{def:efficient-re}
A randomized function $\hat f$ encodes a deterministic function $f$ if it satisfies the following properties:
\begin{itemize}
    \item \textsc{Correctness.} If $f(x)\ne f(x')$, then $\hat f(x)$ and $\hat f(x')$ have disjoint supports.  Moreover, there is an efficient algorithm $D$, called the \textdef{decoder}, such that $D(\hat f(x))=f(x)$ with probability $1$ (or with high probability).
    \item \textsc{Privacy.} If $f(x)=f(x')$, then $\hat f(x)$ and $\hat f(x')$ are identically distributed.  Moreover, there is an efficient randomized algorithm $S$, called the \textdef{simulator}, such that $S(f(x))$ is identically distributed (or statistically close, or computationally indistinguishable) to $\hat f(x)$.
\end{itemize}
\end{definition}

In this paper, we focus on \emph{decomposable} randomized encodings (DREs, also called garbling schemes), which are randomized encodings of the form
\[
  \hat{f}(x; r) = \big (\hat{f}_1(x_1; r), \ldots, \hat{f}_n(x_n; r) \big ).
  \]
DREs originated in secure two-party computation~\cite{yao1986generate,Kilian88} and have since become indispensable throughout cryptography; see~\cite{Applebaum17,BellareHR12,Ishai13} for surveys.  We consider the regime where $n$ grows and each $x_i$ is binary.  The complementary regime, where the local inputs range over much larger domains, is widely studied in work on private simultaneous messages (PSM)~\cite{FKN94,IK97,BKN18,AssoulineLiu21,AHMS20}.

For several fundamental complexity measures, such as polynomial degree~\cite{AIK06}, the minimum complexity of $\hat f$ can be far smaller than that of $f$. Perhaps the simplest example is the $n$-bit parity function, which cannot be computed in $\AC^0$~\cite{Hastad14} but admits a randomized encoding in $\NC^0$~\cite{Kilian88}. For $n\ge2$, $x\in\{0,1\}^n$, and $r\in\{0,1\}^{n-1}$, define $\widehat{\mathsf{XOR}}(x;r)\in\{0,1\}^n$ by
\[
\widehat{\mathsf{XOR}}(x; r)_i = \begin{cases} x_1 \oplus r_1 & \text{if $i = 1$,} \\
r_{i-1} \oplus x_i \oplus r_i  & \text{if $1 < i \le n-1$,} \\
r_{n-1} \oplus x_n & \text{if $i = n$,}
\end{cases}
\]
and decode by taking the parity of the $n$ encoding bits.  This complexity gap is crucial in recent work on locally sampleable distributions and quantum--classical separations~\cite{KaneOstuniWu25,GrierEtAl26}.

For cryptographic applications, the main complexity measure of a DRE is its \textdef{size}, the total bit length of the local encodings. Unfortunately, even for the $n$-bit OR function, the optimal size is unknown: a simple construction of Feige, Kilian, and Naor gives an $O(n\log n)$ upper bound~\cite{FKN94}, whereas the best prior lower bound is only $2n$~\cite{ShinagawaNuida24}.  For arbitrary $n$-bit Boolean functions, the best prior lower bound is $\Omega(n^2/\log n)$, due to Ball, Holmgren, Ishai, Liu, and Malkin~\cite{BallEtAl20}, while the best general upper bound is exponential~\cite{BKN18}.

\subsection{Our results}\label{sec:intro_result}

We study information-theoretic DREs, so our complexity measure ignores the computational cost of producing or decoding an encoding. We begin with perfect DREs, the cleanest setting, and later consider statistical and structural extensions.

In the information-theoretic setting, we directly work on the jointly distributed local messages $Z_i^b:=\hat f_i(b;r)$ induced by the shared randomness $r$, which leads to the following succinct formulation.

\begin{definition}[Perfect decomposable randomized encoding]\label{def:intro-dre}
Let $f\colon\bin^n\to\bin$.  A perfect DRE of $f$ consists of jointly distributed random variables\footnote{For a positive integer $n$, we use $[n]$ to denote $\{1,\ldots,n\}$.} $(Z_i^b)_{i\in[n],b\in\bin}$, each taking values in a finite alphabet $[K]$.
For $x\in\bin^n$, write $Z^x$ to denote
\[
  (Z_i^{x_i})_{i\in[n]}\in[K]^n.
\]
The following conditions must hold:
\begin{itemize}
\item \textsc{Correctness.}
If $f(x)\ne f(x')$, then the supports of $Z^x$ and $Z^{x'}$ are disjoint.
\item \textsc{Privacy.}
If $f(x)=f(x')$, then $Z^x$ and $Z^{x'}$ are identically distributed.
\end{itemize}
The size of the DRE is\footnote{Throughout, logarithms are base two.} defined to be $n\log K$, and $\DRE(f)$ denotes the minimum size of a perfect DRE of $f$.
\end{definition}
We remark that this definition is weaker than \cref{def:efficient-re} in that it does not require efficient sampling of the distribution of any $Z^x$.  Our lower bounds are thus strengthened by ruling out DREs as per \cref{def:intro-dre}.

For perfect DREs, we classify all symmetric functions whose DRE complexity is $O(n)$, and we prove a genuinely quadratic lower bound for general functions.  We then extend and strengthen these results to the setting of imperfect correctness and privacy.

\subsubsection{Lower bounds for perfect DREs}\label{sec:intro-perfect}

For a symmetric function $f\colon\bin^n\to\bin$, write $f(t)$ for its value on inputs of Hamming weight $t$.  We say $f$ has period $a\in[n+1]$ if $f(t)=f(t+a)$ for all $0\le t\le n-a$.

For symmetric functions, our main structural result is as follows.

\begin{restatable}{theorem}{thmintrosymmetric}\label{thm:intro-symmetric}
For every integer $K\ge1$, there exist integers $N_K,a_K\ge1$ such that if $n\ge N_K$ and a symmetric function $f\colon\bin^n\to\bin$ has a perfect DRE with alphabet $[K]$, then $f$ has period $a_K$.
\end{restatable}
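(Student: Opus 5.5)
The approach is a finite-state "pumping'' argument: Ramsey's theorem makes the parties indistinguishable in local structure, and correctness (disjointness of supports) then forces the truth table of $f$ to repeat.

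\emph{Step 1 (local types and Ramsey).} For an ordered tuple of distinct parties $(i_1,\dots,i_r)$, define its \emph{type} as the support of the joint random variable $(Z_{i_j}^b)_{j\le r,\,b\in\bin}\in[K]^{2r}$. There are at most $2^{K^{2r}}$ types, a number depending only on $K$ and $r$. By the hypergraph Ramsey theorem, once $n\ge N_K$ there is a set $H\subseteq[n]$ of size $m$, with $m$ large as a function of $K$, on which all increasing $r$-tuples have the same type, for a parameter $r=r(K)$ chosen later.

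\emph{Step 2 (reduction to $H$).} Fix the inputs of parties outside $H$ to an arbitrary $y$ of weight $s$. Privacy and correctness then apply to the restricted encoding of $t\mapsto f(s+t)$ on the $m$ parties of $H$; the other parties' messages only refine the view and do not break disjointness of supports. Varying $s$ over windows covers all weights. Because the period $a_K$ we find will not depend on the window, it glues to a global period once $n$ is large compared with $a_K$ and the window length.

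\emph{Step 3 (pumping).} The key step is to show the following. Suppose $x$ has weight $t$, and we flip $a$ zeros of $x$ inside $H$ to ones, where $a\le a_K$ is a fixed number of the form $\lcm(1,\dots,K^{O(1)})$. Then some outcome lies in the supports of both $Z^x$ and $Z^{x'}$, and correctness forces $f(t)=f(t+a)$.

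\begin{itemize}
\item The simplest instance is a single party $i$ for which $Z_i^0=Z_i^1$ is jointly possible with a typical outcome of the others. This gives intersecting supports immediately, so $a=1$.
\item In general I would build a functional digraph on the $K$ local values, using the homogeneous type to read off which value of $Z_i^0$ can co-occur with which value of $Z_i^1$. Because the type is homogeneous, this digraph is the same for every party.
\item Following it for a cycle length dividing $a$, I would transform a full outcome of $Z^x$ step by step into one of $Z^{x'}$. The homogeneity of $r$-tuple types guarantees that each intermediate partial assignment remains in the joint support.
\item Privacy supplies the needed freedom: $Z^x$ has the same distribution for every arrangement of the ones. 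We may therefore choose which parties in $H$ carry ones, and so align the cycle with actual positions.
\end{itemize}

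\emph{Main obstacle.} The hard step is the consistency check inside Step 3. Homogeneity of $r$-tuples controls only local supports, while membership in the support of $Z^x$ is a global condition on all $n$ coordinates. I would first try to use privacy to average over permutations of the ones, so that every witness can be moved into any $r$ prescribed coordinates. I would then argue inductively along the cycle that a global outcome can be completed. If this fails, the fallback is to take $r$ growing slowly and use a compactness argument to extract the cycle structure.
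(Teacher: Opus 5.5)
\textbf{There is a genuine gap.} It sits exactly where you place your ``main obstacle'': Step 3 is the whole theorem, and the mechanism you propose for it cannot work as described.

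By privacy and correctness, ``$Z^x$ and $Z^{x'}$ have intersecting supports'' is equivalent to $f(x)=f(x')$, so Step 3 only restates the goal. The real question is what certifies a global collision, and homogeneous $r$-tuple supports carry essentially no information about that. Consider the Feige--Kilian--Naor scheme for $\OR_n$ over $\mathbb{Z}_p$ with $p>n$: $Z_i^0=r_i$ and $Z_i^1=r_i+c$, where $(r_i)$ is uniform subject to $\sum_i r_i=0$ and $c\neq0$ is uniform.
\begin{itemize}
\item For every $r\le n-1$, every $r$-tuple of parties has the same type, namely all $(u_i,u_i+c)_{i}$ with $u$ arbitrary and $c\neq0$. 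So the Ramsey step is vacuous.
\item The co-occurrence relation between $Z_i^0$ and $Z_i^1$ is the complete loopless digraph. It is not functional, and it has cycles of every length.
\item Yet $f(0)\neq f(a)$ for every $a\ge1$.
\item Moreover, the same construction with $n\ge p$ has exactly the same $r$-tuple types but is not a DRE at all.
\end{itemize}
So whatever forces periodicity must come entirely from the global completion step, which you leave open. Conditioning on the parties outside $H$ in Step 2 only adds further global constraints that the types on $H$ do not see.

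The paper resolves this with two ideas that your proposal lacks.

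\emph{Interior.} The paper sidesteps global consistency with a common-transcript trick (\Cref{lem:local-labels}). Fix one $z$ in the support of the fiber distribution and, for each $x$ in that fiber, choose randomness with $Z^x=z$. Correctness then gives $f(x\oplus J)=D_{J,f(x)}((L_i(x))_{i\in J})$, where $L_i(x)=Z_i^{1-x_i}$ and the decoders do not depend on $x$. Among any $2K+1$ consecutive weights, pick $K+1$ with equal $f$-value and pigeonhole their labels over the common zero tail. This yields weights $s'<s\le s'+2K$ whose labels agree on a large set, hence $f(s'+p)=f(s+p)$ for all $p$ up to the size of that set. Gluing windows gives period $\lcm(1,\ldots,2K)$ between $B_K$ and $n-B_K$, with no Ramsey argument at all.

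\emph{Boundary.} Here no local pumping can work. A mismatch below $B_K$ reduces, by restriction, negation, and projection onto blocks of size $a_K$, to a perfect DRE of $\OR_m$ over $[K^{a_K}]$ with $m$ large. So one needs the OR lower bound \Cref{thm:general-or}. The paper proves it by extracting the anti-embedding family $p^S$, with $p^J_J\neq p^L_J$ for $\emptyset\neq J\subsetneq L$, and then applying K\H{o}nig's lemma, Ramsey homogenization for every subset size, and Higman's lemma. Your Step 3 must in particular produce $f(0)=f(a)$, which is an OR lower bound, but nothing in it addresses this.
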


Prior to this work, even for the simplest symmetric function, OR, the best lower bound was only $\DRE(\OR_n)\ge2n$~\cite{ShinagawaNuida24}.  In contrast, \Cref{thm:intro-symmetric} immediately yields the superlinear lower bound $\DRE(\OR_n)=\omega(n)$.\footnote{The proof of \Cref{thm:intro-symmetric} actually relies on the superlinear lower bound for OR; see \Cref{sec:overview}.}

Conversely, the PSM construction of~\cite{BGIKMP14} shows that every symmetric function with period $a$ has a perfect DRE over an alphabet of size at most $a!$.  We therefore obtain the following complete qualitative characterization.

\begin{corollary}\label{cor:intro-symmetric}
A symmetric function has linear DRE if and only if it has constant period.
\end{corollary}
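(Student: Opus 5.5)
The statement is about a family $(f_n)_{n\ge1}$ of symmetric functions $f_n\colon\bin^n\to\bin$. I will read ``linear DRE'' as $\DRE(f_n)=O(n)$, and ``constant period'' as the existence of a constant $a$ such that, for all sufficiently large $n$, $f_n$ has some period at most $a$. The only substantive content is \Cref{thm:intro-symmetric}; the rest is bookkeeping that turns a bound on size into a bound on alphabet size.

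\emph{If direction.} Suppose $f_n$ has period $a_n\le a$ for all large $n$. The PSM construction of~\cite{BGIKMP14} then gives a perfect DRE of $f_n$ over an alphabet of size at most $a_n!\le a!$. Its size is at most $n\log(a!)=O(n)$. For the finitely many small $n$ the size is some constant, so $\DRE(f_n)=O(n)$ for every $n$.

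\emph{Only-if direction.} Suppose $\DRE(f_n)\le Cn$ for all $n$. A minimum-size perfect DRE of $f_n$ uses an alphabet $[K_n]$ with $n\log K_n\le Cn$, so $K_n\le 2^C$. Set $K:=\lfloor 2^C\rfloor$. A DRE over $[K_n]$ with $K_n\le K$ is also a DRE over $[K]$, since we can embed the smaller alphabet and leave the extra symbols unused; correctness and privacy are unaffected. Now apply \Cref{thm:intro-symmetric} with this single $K$. It yields constants $N_K$ and $a_K$ such that every $f_n$ with $n\ge N_K$ has period $a_K$. Since $a_K$ depends only on $C$, this is a constant period.

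If the definition also requires a bounded period for small $n$, note that any $f_n$ trivially has period $n+1$, because the defining condition is vacuous. So every $f_n$ has a period at most $\max(a_K,N_K)$. There is no real obstacle in either direction. The one point that needs care is that \Cref{thm:intro-symmetric} must be invoked with a single $K$ that works uniformly for all $n$, which is exactly what the padding argument provides.
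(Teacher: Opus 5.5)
Your proof is correct and follows the paper's argument: the only-if direction pads the alphabet to a single $K=\lfloor 2^C\rfloor$ and applies \Cref{thm:intro-symmetric}, and the if direction uses the PSM construction of~\cite{BGIKMP14}, whose alphabet has size at most $a!$. Your extra care in invoking the theorem with one uniform $K$, and in handling small $n$ via the trivial period $n+1$, spells out bookkeeping that the paper leaves implicit.
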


For general Boolean functions, we prove a quadratic worst-case lower bound.

\begin{restatable}{theorem}{thmintrorandom}\label{thm:intro-random}
There exists a universal constant $c>0$ such that, for every $n\ge1$, there exists $f\colon\bin^n\to\bin$ with $\DRE(f)\ge c\cdot n^2$.
\end{restatable}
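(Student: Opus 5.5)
We argue by counting: we show that the number of functions with a perfect DRE over alphabet $[K]$ is at most $2^{O(n\log K\cdot \text{something small})}$, and compare it with the $2^{2^n}$ Boolean functions. A naive count does not work, because a DRE is a joint distribution over $[K]^{2n}$ with real-valued probabilities, so there are infinitely many. The plan is therefore to discretize, or to show that the function $f$ is determined by a small combinatorial object.

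The key observation is that correctness and privacy depend only on the \emph{support} structure of the distributions $Z^x$ (for correctness) and on linear equalities among marginals (for privacy). We replace the real distribution by a combinatorial one. Let $S\subseteq [K]^{2n}$ be the support of the joint variable $(Z_i^b)_{i,b}$, and let $\mathrm{supp}(Z^x)$ be the projection of $S$ onto the coordinates selected by $x$. By privacy, $\mathrm{supp}(Z^x)$ depends only on $f(x)$, and by correctness the two possible supports are disjoint. Hence $f$ is determined by $S$ together with the labeling of two sets: $f(x)=1$ iff the projection of $S$ along $x$ equals a fixed set $A_1$.

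The difficulty is that $S$ can be huge, namely up to $K^{2n}$ points. To control this, we pass to a small sub-support. We choose a small ``witness'' set $W\subseteq S$, of size $m=\mathrm{poly}(n,K)$, such that for each $x$ the projection of $W$ along $x$ already distinguishes $f(x)=0$ from $f(x)=1$. For instance, we fix one point $w\in S$ and decide $f(x)$ by whether the projection $w|_x$ lies in $A_{f(x_0)}$; this requires knowing $A_1$, which may itself be large.

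The cleaner route is to use privacy quantitatively. Privacy gives $\Pr[Z^x=z]=\Pr[Z^{x'}=z]$ for all $x,x'$ with $f(x)=f(x')$. Taken over all $z\in[K]^n$, these are linear constraints on the distribution $p$ over $[K]^{2n}$, and they define a polytope. A vertex of that polytope, equivalently a basic feasible solution via Carathéodory, has support of size at most the number of independent constraints. Summing privacy over $z$ in pieces reduces the relevant constraints to roughly $2K^n$ marginal values, which is still too large.

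Instead, we encode $f$ by a single sample $w\in S$: $f(x)=1$ iff $w|_x\in A_1$. The function is then determined by $A_1\subseteq[K]^n$, which is again too much information. This suggests that the counting has to be done on the pair $(S,w)$ through the structure of the supports rather than on $A_1$ directly. We therefore bound the number of possible families $\{\mathrm{supp}(Z^x)\}$ by showing that $S$ can be taken to be a product-like set describable in $O(n^2\log K)$ bits, and conclude that $n\log K\ge cn^2$ whenever $2^{O(n^2\log K)}<2^{2^n}$. This conclusion is vacuous as stated, since that inequality holds already for small $K$.

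The real lower bound therefore has to come from a sharper encoding. We would show that $f$ restricted to any two-variable fiber is determined by $O(\log K)$ bits of the DRE, and then use a rectangle and communication-style argument, as in \cite{BallEtAl20}, improved by removing their $\log n$ loss. We expect this step to be the main obstacle.
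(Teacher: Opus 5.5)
Your proposal does not yet contain a proof. Each route you try is abandoned: the global support $S$, the single sample $w$ together with the set $A_1$, and the Carath\'eodory reduction. The step you finally rely on, a ``rectangle and communication-style argument'' improving \cite{BallEtAl20}, is left unspecified. You also misread one of your dead ends. If $f$ really were determined by $O(n^2\log K)$ bits of the DRE, counting would give some $f$ with $\log K=\Omega(2^n/n^2)$. That conclusion is not vacuous; it is false, because it contradicts the $\widetilde O(2^{n/2})$ upper bound of \cite{BKN18}. So no global compression of that kind exists, and the argument can only compress \emph{local} neighborhoods of $f$.

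The missing idea is a local, re-randomized version of your single-sample encoding; this is \Cref{lem:local-labels}. Use privacy not as linear constraints but to pin transcripts.
\begin{itemize}
\item Fix $z_1\in\operatorname{supp}\Pyes$ and $z_0\in\operatorname{supp}\Pno$.
\item For each anchor $x$, choose randomness $\lambda_x$ with $Z^x(\lambda_x)=z_{f(x)}$, and record $L_i(x)=Z_i^{1-x_i}(\lambda_x)$.
\item For $|J|=2$, the string $Z^{x\oplus J}(\lambda_x)$ equals $z_{f(x)}$ off $J$ and $(L_i(x))_{i\in J}$ on $J$.
\item Correctness then gives $f(x\oplus J)=D_{J,f(x)}((L_i(x))_{i\in J})$, where $D_{J,b}(u)=\indicator[((z_b)_{[n]\setminus J},u)\in\operatorname{supp}\Pyes]$.
\end{itemize}
What you were missing is that $D_{J,b}$ is only a $K^2$-bit table, and it is \emph{shared by all anchors} with $f(x)=b$. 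Querying $A_1$ only on this two-dimensional slice removes your ``$A_1$ is too large'' obstacle. Your closing idea, that a two-variable fiber is determined by $O(\log K)$ bits, points the right way but lacks this sharing, and without sharing there is no compression.

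Now take $W$ with pairwise distance at least $5$ and $|W|=\Omega(2^n/n^4)$. The radius-$2$ spheres around points of $W$ are then disjoint and cover $|W|\binom n2$ points. On these points $f$ is determined by $f|_W$, the labels $L_i(x)$ for $x\in W$, and the $2\binom n2$ tables $D_{J,b}$. If every $f$ had a DRE over $[K]$, we would need $|W|(1+n\log K)+2\binom n2K^2\ge|W|\binom n2$. For $K=\lfloor2^{n/4}\rfloor$ the term $2\binom n2K^2$ is negligible against $|W|$, so this inequality fails for large $n$. Hence some $f$ has $\DRE(f)\ge n^2/4-o(n^2)$, and no communication-complexity step is needed.
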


Prior to this work, the best general lower bound was $\Omega(n^2/\log n)$~\cite{BallEtAl20}; \Cref{thm:intro-random} improves it by a logarithmic factor.

\subsubsection{Statistical, probabilistic, and structural extensions}\label{sec:intro-extensions}

The results in \Cref{sec:intro-perfect} admit several robust extensions.

\paragraph{Statistical extension.}
\Cref{def:intro-dre} defines perfect DREs.  Their statistical counterparts allow small correctness and privacy errors~\cite{AIK06,AIKP15,AHMS20}.

Informally, a statistical DRE specifies two reference distributions, $\Pyes$ and $\Pno$, and requires $Z^x\approx\Pyes$ for $x\in f^{-1}(1)$ and $Z^x\approx\Pno$ for $x\in f^{-1}(0)$.  The correctness error $\delta$ measures the overlap between $\Pyes$ and $\Pno$, while the privacy error $\eta$ is the maximum distance between $Z^x$ and the appropriate reference distribution.  Perfect DREs correspond to $\delta=\eta=0$.  Formal definitions and a comparison with prior notions appear in \Cref{sec:statistical-extensions}.

We strengthen \Cref{thm:intro-symmetric} in this model.

\begin{restatable}{theorem}{thmintroapproxsymmetric}\label{thm:approx-symmetric}
For every integer $K\ge1$, there exist $\varepsilon_K\in(0,1]$ and integers $N_K,a_K\ge1$ such that the following holds.
Assume $n\ge N_K$ and a symmetric function $f\colon\bin^n\to\bin$ has a statistical DRE with alphabet $[K]$, correctness error $\delta$, and privacy error $\eta$.
If $\delta+\eta\le\varepsilon_K$, then $f$ has period $a_K$.
\end{restatable}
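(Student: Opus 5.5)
We plan to derive the statement from the perfect case, \Cref{thm:intro-symmetric}, through a compactness-and-rounding argument, keeping the same $N_K$ and $a_K$. The alphabet is the fixed set $[K]$, but the joint distribution of $(Z_i^b)$ ranges over the simplex on $[K]^{2n}$, whose dimension grows with $n$. So compactness cannot be applied to the whole DRE at once, and we will work with local structure instead.

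\textbf{Step 1 (reduction to a few parties).} Define a restriction to be a choice of a set $S\subseteq[n]$ of $m$ live parties together with fixed bits for the other parties. Any DRE of $f$ restricts to a DRE of a symmetric function on $m$ bits over the same alphabet: the fixed parties' messages are kept as part of the encoding, and the correctness and privacy errors do not increase. Since $f$ is symmetric, such a restriction gives $g(t)=f(t+w)$, where $w$ is the weight of the fixed bits. Periodicity of $f$ with period $a_K$ is a statement about windows of length at most $a_K+1$. We therefore need only the following local statement: whenever $f(w)\neq f(w+a_K)$, some restriction to $m=N_K$ live parties yields a non-$a_K$-periodic symmetric function of $m$ bits.

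\textbf{Step 2 (compactness on $m$ parties).} The fixed parties are not free, because their messages are still part of the transcript. To handle this, we will use the perfect theorem in its strengthened internal form, which the paper says comes from the OR lower bound. In that form, the $m$ live parties' messages are jointly distributed with an auxiliary variable $W$ of arbitrary alphabet, and $W$ is simply the fixed parties' messages. If the internal statement only needs joint distributions of $m$ live messages conditioned on the auxiliary value, then for fixed $m$ and $K$ the set of such conditional objects is compact. Consider the set of DREs with error $\delta+\eta\le\varepsilon$ for a fixed non-$a_K$-periodic pattern $g$ on $m$ bits. As $\varepsilon\to0$, limit points are perfect DREs of $g$, which \Cref{thm:intro-symmetric} forbids. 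Since there are finitely many patterns $g$, a uniform $\varepsilon_K>0$ exists.

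\textbf{Main obstacle.} The hard part is Step 2's use of an unbounded auxiliary variable $W$. The correctness error is a global overlap between $\Pyes$ and $\Pno$, and after conditioning on $W$ it need not be small for each value of $W$. We would first try averaging. By Markov's inequality, a constant fraction of the $W$-mass has small conditional correctness and privacy error, provided the auxiliary distribution is the same across the relevant inputs. That is exactly privacy on the fixed parties, and it holds up to error $\eta$. On such good values of $W$, the conditional DRE lives on $[K]^m$ with bounded alphabet, so compactness applies.

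If this averaging fails, the fallback is to re-prove the OR lower bound in a robust form. The perfect argument presumably finds, via pigeonhole over $[K]$, repeated message distributions that force a period. In place of exact equalities we would use continuity: every equality of distributions becomes an $O(\delta+\eta)$ approximation. Because the pigeonhole is over a number of configurations depending only on $K$, the errors accumulate only by a factor depending on $K$, which again gives a valid $\varepsilon_K$.
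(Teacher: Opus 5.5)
Your main route is correct, and it differs from the paper's. The paper does not use \Cref{thm:intro-symmetric} as a black box. Instead it re-runs all three steps robustly:
\begin{itemize}
\item \Cref{lem:approx-local-labels} (a coupling plus a union bound over $O_K(1)$ inputs and sets $J$) gives local periodicity, hence periodicity between $B_K$ and $n-B_K$;
\item \Cref{lem:approx-block-identification} restricts by appending the fixed parties' messages to a live party, which is affordable only because just $O_K(1)$ bits are fixed;
\item the OR step builds a finitary anti-embedding family via \Cref{lem:approx-local-labels} and feeds it to \Cref{lem:nonexist_anti}.
\end{itemize}
You instead localize a violation $f(t)\ne f(t+a_K)$ in a window of $m=\max(N_K,a_K)$ live parties. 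You then remove the auxiliary $W=Z^c_{\bar S}$ by conditioning, and finish by compactness.

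The averaging you were unsure about does go through, once you realign the references.
\begin{itemize}
\item $W$ has the same law $\nu$ under every $Z^{(y,c)}$, and the $\bar S$-marginal of each $P_b$ is $\eta$-close to $\nu$ because $g$ is nonconstant.
\item Replacing $P_b$ by $\tilde P_b=\nu\otimes P_b(\cdot\mid W)$ therefore costs at most $\eta$ per reference.
\item All distributions now share the $W$-marginal $\nu$, so each total variation distance equals $\E_{w\sim\nu}$ of the conditional one. This gives $\E_w\bigl[\delta_w+\sum_y\eta_{y,w}\bigr]\le\delta+(2^{m+1}+2)\eta$.
\item Hence some single $w$ yields a statistical DRE of $g$ over $[K]^m$ with error $2^{O(m)}(\delta+\eta)$. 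The union bound over $2^m$ inputs is harmless precisely because $m$ depends only on $K$.
\end{itemize}
Compactness of the space of pairs (joint law on $[K]^{2m}$, references) and continuity of $\delta+\eta$ then give a positive minimum for each of the finitely many non-$a_K$-periodic $g$. A zero minimum would be a perfect DRE, which \Cref{thm:intro-symmetric} forbids at length $m$.

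So no ``strengthened internal form'' of the perfect theorem is needed. The fallback is also unnecessary, and it misdescribes the OR argument, which works with supports and anti-embedding families rather than repeated message distributions.

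Your route is modular and short, and it keeps $N_K$ and $a_K$ essentially as in the perfect case. It is really a general transfer principle: any perfect impossibility at fixed length and alphabet yields a small-error statistical impossibility. The paper's route ties $\varepsilon_K$ to explicit (if non-primitive-recursive) quantities, whereas compactness gives only the existence of $\varepsilon_K$ with no bound. It also reuses the same local-representation machinery as \Cref{thm:intro-approx-random}.
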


Together with \Cref{cor:intro-symmetric}, \Cref{thm:approx-symmetric} yields the following qualitative equivalence.

\begin{corollary}\label{cor:approx_symmetric}
A symmetric function has linear statistical DRE (with sufficiently small error) if and only if it has linear perfect DRE, and these conditions hold if and only if the function has constant period.
\end{corollary}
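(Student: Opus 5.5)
The plan is to chain together three facts already available: \Cref{thm:approx-symmetric}, \Cref{cor:intro-symmetric} (equivalently, \Cref{thm:intro-symmetric} together with the construction of~\cite{BGIKMP14}), and the observation that a perfect DRE is a statistical DRE with $\delta=\eta=0$. Throughout, "linear" is read asymptotically over a family $(f_n)_{n\ge1}$ of symmetric functions: there is a constant $C$ with size at most $Cn$ for every $n$. "Constant period" means that $f_n$ has period at most some constant $a$ for every $n$. I will establish the cycle of implications
\[
\text{constant period}\;\Rightarrow\;\text{linear perfect DRE}\;\Rightarrow\;\text{linear statistical DRE}\;\Rightarrow\;\text{constant period}.
\]

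\emph{Constant period $\Rightarrow$ linear perfect DRE.} This is one direction of \Cref{cor:intro-symmetric}. If each $f_n$ has period $a$, the PSM construction of~\cite{BGIKMP14} gives a perfect DRE over an alphabet of size at most $a!$. Its size is therefore at most $n\log(a!)=O(n)$.

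\emph{Linear perfect DRE $\Rightarrow$ linear statistical DRE.} A perfect DRE is a statistical DRE with the same alphabet, taking $\Pyes$ and $\Pno$ to be the common laws of $Z^x$ on $f^{-1}(1)$ and on $f^{-1}(0)$. Correctness gives disjoint supports, so $\delta=0$, and privacy gives $\eta=0$. This holds for any requirement on the errors, however small.

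\emph{Linear statistical DRE with small error $\Rightarrow$ constant period.} This is the substantive direction, and the only point needing care is the order of quantifiers. Suppose that for every $n$, $f_n$ has a statistical DRE with alphabet $[K_n]$ and size $n\log K_n\le Cn$. Then $K_n\le K:=\lfloor 2^C\rfloor$, and by padding the alphabet every such DRE may be viewed over $[K]$. "Sufficiently small error" is then interpreted as $\delta+\eta\le\varepsilon_K$, a threshold depending only on the linearity constant $C$.

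\Cref{thm:approx-symmetric} now gives that $f_n$ has period $a_K$ for all $n\ge N_K$. For the finitely many $n<N_K$, $f_n$ trivially has period $n+1\le N_K$. Hence every $f_n$ has period at most $\max(a_K,N_K)$, which is a constant.

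The main (mild) obstacle is this bookkeeping: the error threshold must be allowed to depend on the linear-size constant, and the small-$n$ exceptions must be absorbed into the trivial period $n+1$. Once these are handled, the cycle closes and all three conditions are equivalent.
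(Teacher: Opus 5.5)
Your proposal is correct and follows the paper's route: the corollary comes from combining \Cref{thm:approx-symmetric} with \Cref{cor:intro-symmetric} (using the construction of~\cite{BGIKMP14} for the converse), together with the trivial fact that a perfect DRE is a statistical DRE with $\delta=\eta=0$. Your explicit bookkeeping spells out what the paper's one-line derivation leaves implicit: you let the error threshold $\varepsilon_K$ depend on the linearity constant via $K=\lfloor 2^C\rfloor$, and you absorb the cases $n<N_K$ into the trivial period $n+1$.
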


\paragraph{Probabilistic extension.}
The $\Omega(n^2/\log n)$ lower bound of~\cite{BallEtAl20} holds for a random function. We likewise strengthen \Cref{thm:intro-random} to a random function, even in the statistical setting.

\begin{restatable}{theorem}{thmintroapproxrandom}\label{thm:intro-approx-random}
For every $\bar\delta,\bar\eta\in[0,1]$ with $\bar\delta+2\cdot\bar\eta<1/2$, there is $c_{\bar\delta,\bar\eta}>0$ such that the following holds.
For a uniformly random Boolean function $f\colon\bin^n\to\bin$, except with probability at most
\[
  2^{-c_{\bar\delta,\bar\eta}\cdot 2^n/n^2},
\]
every statistical DRE of $f$ with correctness error $\delta\le\bar\delta$ and privacy error $\eta\le\bar\eta$ has size at least
$$
c_{\bar\delta,\bar\eta}\cdot n^2.
$$
\end{restatable}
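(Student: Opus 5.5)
Counting argument: show that the number of Boolean functions admitting a statistical DRE of size $s=c\,n^2$ with small errors is far smaller than $2^{2^n}$. Such functions fall into few equivalence classes, each class determined by a short description. Exact counting is hopeless, since the joint distributions $(Z_i^b)$ form a continuum. So we discretize and compress.

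\textbf{Step 1 (reduction to a compact object).} With alphabet $K=2^{s/n}=2^{cn}$, a DRE is a joint distribution $\mu$ on $([K]^2)^n$, and only the $2^n$ "views" $Z^x$ matter. Privacy says each $Z^x$ is $\eta$-close to $\Pyes$ or $\Pno$. Correctness says $\TVdist(\Pyes,\Pno)\ge1-\delta$. Since $\delta+2\eta<1/2$, we get $\TVdist(Z^x,Z^{x'})$ large whenever $f(x)\neq f(x')$. Hence $f$ is recovered, up to a global complement, from $\mu$ by a fixed rule: fix some $x_0$, and set $f(x)=f(x_0)$ iff $\TVdist(Z^x,Z^{x_0})\le 2\eta$.

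\textbf{Step 2 (reduce the randomness).} I would sample $m$ i.i.d.\ draws $r^{(1)},\dots,r^{(m)}$ of the full tuple $(Z_i^b)$ and replace $\mu$ by their empirical distribution. For a fixed $x$, the empirical law of $Z^x$ concentrates, but $Z^x$ lives on $K^n$ points, so total variation does not concentrate with small $m$. Instead, use a distinguisher. Let $A=\mathrm{supp}$-type set where $\Pyes>\Pno$. Then $\Pr[Z^x\in A]\ge 1-\delta-\eta-\ldots$ for yes-inputs and is small for no-inputs, with gap bounded below by a constant depending on $\bar\delta,\bar\eta$. This is exactly where $\bar\delta+2\bar\eta<1/2$ is used. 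By Hoeffding with $m=O(n)$ samples, a union bound over all $2^n$ inputs shows there exist samples such that for every $x$ the empirical frequency of $Z^x\in A$ lies on the correct side of a threshold.

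\textbf{Step 3 (count descriptions).} The function $f$ is then determined by the following data:
\begin{itemize}
\item the $m$ samples, each consisting of $2n$ symbols of $cn$ bits, for $O(cn^3)$ bits in total;
\item the set $A\subseteq[K]^n$, which is too large to write down.
\end{itemize}
Describing $A$ is the main obstacle. My fix is to note that only the images of the samples matter: $A$ need only be specified on the at most $m2^n$ strings $Z^x(r^{(j)})$. That is still far too much information, $2^{n}$ bits, so the count fails naively. Next I would restrict the distinguisher class to products or small-depth combinations. That loses too much, however, since the decoder may be arbitrary. The real target is $\log\#\{f\}\le 2^n(1-\Omega(1))$, which lines up with the stated failure probability $2^{-c2^n/n^2}$.

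This suggests a different strategy: one DRE controls $f$ on many inputs at once via the structure of $A$. Consider combinatorial rectangles or subcubes. Restrict to a subcube $C$ of dimension $d\approx\log(n^2)$-ish, or rather partition the variables into $n/k$ blocks. On each $k$-block the local messages of size $ck\cdot n$ bits determine which sub-functions can occur. Then apply the previous $\Omega(n^2/\log n)$-style argument of \cite{BallEtAl20} per block, improving the log factor by exploiting that blocks share the same $\Pyes,\Pno$.

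I would first attempt to carry out, rigorously, the following two-part count:
\begin{itemize}
\item a sample-based compression of $(Z_i^b)$ costing $O(cn^3)$ bits;
\item the observation that, given these samples, the induced partition of $\bin^n$ by empirical views can realize only $2^{(1-\Omega(1))2^n}$ functions.
\end{itemize}
Finding the right VC/sign-rank-type bound for this last partition count is the hardest step.
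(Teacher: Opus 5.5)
\textbf{There is a genuine gap: the proposal never completes the count.} After Step 2, $f$ is determined by the samples together with the set $\Acal$. As you note yourself, $\Acal$ can be arbitrary and costs roughly $2^n$ bits even when restricted to the sampled views. Your fallback ideas are left unspecified, and none of them is what is needed. Those ideas were restricted distinguisher classes, a blockwise version of \cite{BallEtAl20}, and a VC/sign-rank bound on the induced partition.

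The root problem is that you use the \emph{same} random samples for every input $x$. Nothing then ties the decoder's behaviour at one input to its behaviour at nearby inputs, so $\Acal$ must be described globally. The $m$-sample discretization does not help with this.

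Your target is also miscalibrated. A failure probability of $2^{-c2^n/n^2}$ only requires saving $\Omega(2^n/n^2)$ bits, not $\Omega(2^n)$. So it suffices to compress $f$ on a sparse set of inputs and leave the rest of the truth table free.

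\textbf{The missing idea is a local representation.} It uses \emph{different} randomness for different inputs, all coupled to a common reference transcript. This is exactly what privacy buys.
\begin{enumerate}
\item Sample $Y_1\sim\Pyes$ and $Y_0\sim\Pno$ once.
\item Take a code $W\subseteq\bin^n$ with minimum distance $5$ and $|W|=\Omega(2^n/n^4)$. For each $x\in W$, couple randomness $\lambda_x$ so that $Z^x(\lambda_x)=Y_{f(x)}$ except with probability $\eta$.
\item Set $L_i(x)=Z_i^{1-x_i}(\lambda_x)$. For $|J|=2$, set $D_{J,b}(u)=\indicator[((Y_b)_{[n]\setminus J},u)\in\Acal]$.
\end{enumerate}
$Z^{x\oplus J}(\lambda_x)$ agrees with $Z^x(\lambda_x)$ off $J$ and equals $(L_i(x))_{i\in J}$ on $J$. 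Hence $D_{J,f(x)}((L_i(x))_{i\in J})=f(x\oplus J)$ except with probability $\delta+2\eta\le\theta<1/2$. Now only the $K^2$-entry slices of $\Acal$ through $Y_0$ and $Y_1$ are needed, and they are shared by all $x$.

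\textbf{The count.} Let $R$ be the points at distance exactly $2$ from $W$; each has a unique representation $x\oplus J$, so $|R|=|W|\binom n2$. Fix a good outcome of the randomness by averaging. Then $f|_R$ is described by:
\begin{itemize}
\item $2K^2\binom n2$ bits of decoders;
\item $n|W|\log K$ bits of labels;
\item $|W|$ bits for the values $f(x)$;
\item a correction pattern of at most $\theta|R|$ errors, costing $\Hcal(\theta)|R|$ bits.
\end{itemize}
With $\log K=cn$ and $c<1/2$, this totals $(\Hcal(\theta)+2c+o(1))|R|$ bits. That is less than $|R|$ once $\Hcal(\theta)+2c<1$, which yields the stated bound.

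This is where $\bar\delta+2\bar\eta<1/2$ is actually used: it makes $\Hcal(\theta)<1$. Your Step 2 distinguishing gap would only need $\delta+2\eta<1$.
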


\paragraph{A structured lower bound for OR.}
The proofs of \Cref{thm:intro-symmetric} and \Cref{thm:approx-symmetric} use corresponding lower bounds for OR as black boxes; see \Cref{sec:overview}.  It is therefore important to understand the DRE complexity of OR itself.

The classic construction of~\cite{FKN94} gives $\DRE(\OR_n)=O(n\log n)$, whereas the prior lower bound was only $\DRE(\OR_n)\ge2n$~\cite{ShinagawaNuida24}.  \Cref{thm:general-or} improves the latter to $\DRE(\OR_n)=\omega(n)$, although the rate hidden by the $\omega(n)$ notation is extremely weak; see \Cref{sec:discussion}.  To narrow the remaining gap, we prove a matching lower bound for perfect DREs under a symmetry assumption on their support, which captures and goes beyond the classic construction of~\cite{FKN94}.

Recall that $Z^{0^n}\in[K]^n$ denotes the randomized encoding of $\OR_n$ on input $0^n$.  We say that a DRE has symmetric support on input $0^n$ if the support of $Z^{0^n}$ is closed under coordinate permutations.

\begin{restatable}{theorem}{thmintropermutationclosed}\label{thm:intro-permutation-closed}
Every perfect DRE of $\OR_n$ with symmetric support on input $0^n$ has size $\Omega(n\log n)$.
\end{restatable}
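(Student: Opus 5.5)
Let $S\subseteq[K]^n$ be the support of $Z^{0^n}$, which is closed under coordinate permutations, and let $T$ be the common support of $Z^x$ for $x\ne 0^n$. Correctness gives $S\cap T=\emptyset$. We aim to show that $S$ must "remember" the positions of $1$s. Concretely, we plan to show that $K$ must grow with $n$, and in fact $K\ge n^{\Omega(1)}$, which yields size $n\log K=\Omega(n\log n)$.

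\textbf{Step 1: extract a combinatorial structure.} For each $i$, let $A_i^b$ be the support of $Z_i^b$. For a string $z\in S$ and a coordinate $i$, write $z[i\to c]$ for $z$ with coordinate $i$ set to $c$. Privacy says $Z^{e_i}$ has the same distribution as $Z^{e_j}$. Correctness says that every joint outcome of $(Z^0,Z^1)$ gives, for each nonempty set $U\subseteq[n]$, a string $z^U$ (with coordinates from $Z^1$ on $U$ and from $Z^0$ elsewhere) that lies in $T$, hence outside $S$. Fix one joint outcome, i.e.\ two strings $u,v\in[K]^n$ with $u\in S$ and every hybrid $u_{\bar U}v_U\notin S$ for $U\ne\emptyset$. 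Since $S$ is permutation-closed, membership in $S$ depends only on the multiset (type) of the string.

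Thus $u$ has a type $\tau\in S$, and replacing any nonempty set of coordinates $i$ by $v_i$ leaves $S$. This fact by itself is too weak, since $u$ and $v$ are arbitrary. We therefore also use privacy: for $x$ of weight one, the hybrid distribution equals the distribution of $Z^{e_1}$. We need to exploit the full joint distribution, not a single outcome.

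\textbf{Step 2: a counting/type argument.} Let $P$ be the distribution of $Z^{0^n}$. Because $P$ is supported on a permutation-closed set, we can symmetrize: applying a uniformly random permutation $\pi$ simultaneously to all coordinates of $(Z^0,Z^1)$ preserves both correctness and the symmetric support. Privacy is preserved as well, because $Z^{\pi(x)}$ is equidistributed with $Z^x$. So we may assume the joint distribution of $(Z_i^0,Z_i^1)_i$ is exchangeable.

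Now consider the pair-type $\sigma$ of $(Z^0,Z^1)$, a multiset of pairs in $[K]^2$, which has at most $\binom{n+K^2}{K^2}$ possible values. Setting coordinates in $U$ to $1$ changes the type of the output string by moving $|U|$ elements from $Z^0$-values to $Z^1$-values. Correctness forces that, within a typical $\sigma$, there is no nonempty $U$ such that the resulting type equals a type in $S$; in particular it must not equal the type of $Z^0$ itself. So for every nonempty multiset of pairs drawn from $\sigma$, the sum of $e_{b}-e_{a}$ over its pairs $(a,b)$ must be nonzero, and more strongly must avoid all of $S$'s types. This is a zero-sum-free condition on a multiset of $n$ vectors in $\mathbb Z^K$ with entries in $\{0,\pm1\}$. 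Off-diagonal pairs $(a,b)$ with $a\ne b$ must be plentiful: if $Z_i^0=Z_i^1$ at some coordinate then setting $U=\{i\}$ leaves the type unchanged, which is forbidden. Hence all $n$ vectors are nonzero. A zero-sum-free multiset of $n$ vectors $e_b-e_a$ amounts to a directed multigraph on $K$ vertices with $n$ edges and no directed cycle (a cycle sums to zero). An acyclic multigraph on $K$ vertices can still have arbitrarily many parallel edges, so we must also use privacy.

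\textbf{Step 3 (main obstacle): use privacy to force many distinct pairs.} Privacy says that $Z^{e_i}$, whose type is $\mathrm{type}(Z^0)-e_a+e_b$, has the same type distribution as $Z^{0^n}$ shifted consistently, and that all weight-$w$ inputs give the same type distribution. Using exchangeability, the type distribution for weight $w$ is that of $\mathrm{type}(Z^0)$ plus $w$ edges sampled without replacement from the acyclic multigraph. Acyclicity gives a topological order, and weight-$n$ versus weight-$(n-1)$ equality of distributions constrains the multigraph. We would compare weights $w$ and $w+1$ and argue, via an entropy or potential function along the topological order, that parallel edges give a detectable shift. The plan is to show that the distribution of $\mathrm{type}(Z^0)$ must itself be spread over about $n$ distinct types, so that the number of types carried by $S$, which is at most $(n+1)^K$, must be at least $n^{\Omega(n)/\ldots}$, or alternatively that $K\ge n^{c}$. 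This last quantitative step is where I expect the real difficulty; I would first test the approach against the FKN construction to calibrate which invariant grows like $\log n$ per coordinate.
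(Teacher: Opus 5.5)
Your proposal has a genuine gap. Step 3, which you yourself flag, is the entire quantitative content of the theorem, and the proposal gives no mechanism for it. Steps 1--2 do reach the right combinatorial object: by correctness and the permutation-closed support, the edges $Z_i^0\to Z_i^1$ form a loopless directed multigraph on $[K]$ with no directed cycle. Your exchangeability symmetrization is valid, though unnecessary. Your two suggested continuations are not developed. One compares type distributions at weights $w$ and $w+1$. The other argues that the type of $Z^{0^n}$ is spread over many values, and this one does not obviously yield $K\ge n^{c}$: there are at most $(n+1)^K$ types, so "many types" gives only logarithmic information about $K$.

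The missing idea is a way to make privacy upper-bound the potential $\sum_i(\rho(Z_i^1)-\rho(Z_i^0))$, where $\rho\colon[K]\to\{0,\ldots,K-1\}$ is a topological order. Over all of $[n]$ this cannot work, since $Z^{0^n}$ and $Z^{1^n}$ have different distributions. Instead, restrict to a uniformly random half $A\subset[n]$ with $|A|=n/2$, and write $Z^b_A=(Z_i^b)_{i\in A}$.

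First, $Z^{e_{[n]\setminus A}}$ and $Z^{1^n}$ are both distributed as $\Pyes$. Hence $(Z^0_A,Z^1_{[n]\setminus A})$ has the same joint distribution as $(Z^1_A,Z^1_{[n]\setminus A})$.

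Second, fix $Z$ and let $t_c$ be the number of $i$ with $Z_i^1=c$. For a random balanced split, the variance of sampling without replacement gives $\E_A\vabs{\hsf(Z^1_A)-\hsf(Z^1_{[n]\setminus A})}_1\le\sum_c\sqrt{t_c}\le\sqrt{Kn}$.

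By the triangle inequality, some outcome satisfies $\vabs{\hsf(Z^0_A)-\hsf(Z^1_A)}_1\le2\sqrt{Kn}$. Now apply acyclicity to the edges indexed by $A$ only. Each such edge raises $\rho$ by at least $1$, so $n/2\le\sum_{i\in A}(\rho(Z_i^1)-\rho(Z_i^0))=\sum_c\rho(c)\big(\hsf_c(Z^1_A)-\hsf_c(Z^0_A)\big)\le 2K^{3/2}\sqrt n$. This gives $K=\Omega(n^{1/3})$, hence size $\Omega(n\log n)$, after a restriction to handle odd $n$.

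Your intuition that parallel edges produce a detectable shift along the topological order is exactly right. What is missing is this half-split comparison, which converts privacy into an $\ell_1$ histogram bound that the shift must violate.
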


\subsection{Discussions and AI disclosure}\label{sec:discussion}

We next discuss the quantitative aspects of our results and highlight several open problems.

\paragraph{On the period.}
The period $a_K$ in \Cref{thm:intro-symmetric} and \Cref{thm:approx-symmetric} may be chosen as the least common multiple of $1,2,\ldots,2K$, and hence $a_K=2^{O(K)}$.  Apart from the threshold $N_K$, this dependence is essentially optimal: for a fixed alphabet $[K]$, there are perfect DREs for $\OR_m$ whenever $1\le m\le\Theta(K)$~\cite{FKN94}.  Consequently, every such $m$ must divide the universal period $a_K$, giving $a_K=2^{\Omega(K)}$.

Our proof actually gives a sharper function-dependent statement: if a symmetric function $f$ has a perfect DRE with alphabet $[K]$, then $f$ has a period $a_f$ satisfying $1\le a_f\le2K$.  The OR example again shows that this bound is essentially optimal.

\paragraph{On the input length.}
The threshold $N_K$ in \Cref{thm:intro-symmetric} is enormous: even for OR, it is not primitive recursive.  The proof reduces the general symmetric case to the OR lower bound with an exponential, but still elementary, loss.  The OR argument itself uses Higman's lemma, whose quantitative bounds are necessarily not primitive recursive.  Consequently, our bound $\DRE(\OR_n)=\omega(n)$ is numerically uninformative.

We nevertheless conjecture the optimal bound below and view the structured lower bound in \Cref{thm:intro-permutation-closed} as supporting evidence.

\begin{conjecture}\label{conj:or-optimal}
$\DRE(\OR_n)=\Omega(n\log n)$.
\end{conjecture}

Regarding \Cref{conj:or-optimal}, we remark that many existing DRE constructions have a group-theoretic flavor~\cite{FKN94,BGIKMP14,Yoshida24,HiwatashiEriguchi26} and are based on permutation branching programs~\cite{Kilian88,Barrington89,IK97}.  Known complexity-theoretic lower bounds therefore apply to this class of constructions. For example, the analysis of Cai and Lipton~\cite{CaiLipton94} confirms \Cref{conj:or-optimal} for constructions based on constant-read permutation branching programs, a class that includes the classic construction~\cite{FKN94}.

\paragraph{On the error dependence.}
\Cref{thm:approx-symmetric} requires both the correctness error $\delta$ and the privacy error $\eta$ to be sufficiently small as functions of $K$.  Ishai~\cite{Ishai13} describes a simple statistical DRE for $\OR_n$ with alphabet $[K]$ for arbitrarily large $n$. With suitable choices of $\Pyes$ and $\Pno$, this construction has either $(\delta,\eta)=(\Theta(1/K),0)$ or $(\delta,\eta)=(0,\Theta(1/K))$. Thus the hypothesis of \Cref{thm:approx-symmetric} must depend on both errors and on $K$.

The value of $\varepsilon_K$ obtained in \Cref{thm:approx-symmetric} is itself not primitive recursive in $K$, again because of the weak quantitative bound for OR.

\paragraph{On the worst-case bound.}
\Cref{thm:intro-random} gives an existential quadratic lower bound for Boolean functions.  This is still exponentially smaller than the upper bound $\DRE(f)=\widetilde O\pbra{2^{n/2}}$~\cite{BKN18}.  We expect the worst-case complexity to be much larger and make the following modest conjecture.

\begin{conjecture}\label{conj:intro-random}
There exists $f\colon\bin^n\to\bin$ with $\DRE(f)=n^{\omega(1)}$.
\end{conjecture}

A further limitation of \Cref{thm:intro-random} is its nonconstructive proof, whereas \cite{BallEtAl20} give explicit functions, including Element Distinctness and Clique, with $\Omega(n^2/\log n)$ lower bounds.  Finding an explicit function of genuinely quadratic DRE complexity remains open.  The same work also constructs a quadratic-size DRE for a candidate pseudorandom function; assuming the candidate is secure, this creates a natural-proofs barrier to constructive superquadratic lower bounds.  Curiously, our non-explicit method also stops at quadratic.

\paragraph{On linear DRE complexity.}
\Cref{cor:intro-symmetric,cor:approx_symmetric} classify only symmetric functions of linear DRE complexity, equivalently those admitting DREs over constant-size alphabets.  It remains to characterize the general Boolean functions that admit small DREs.  We record two relevant observations.

Every nondegenerate monotone Boolean function embeds an OR or AND of superconstant size~\cite{Simon82}.  Our OR lower bound therefore rules out linear-size DREs for such functions.

It is also known that every nondegenerate Boolean function embeds a nontrivial symmetric function of superconstant size~\cite{SunEtAl20}, to which \Cref{thm:intro-symmetric} applies.  However, we do not know how to combine this local information into a global characterization.

\paragraph{AI disclosure.}
The authors proved several structured lower bounds for OR, including \Cref{thm:intro-permutation-closed}, in fall 2022.  More recently, the authors revisited the project with GPT 5.6 Sol and developed the remaining results.  The key new observations \Cref{lem:local-labels,lem:local-anti}, while not hard in hindsight, were discovered by GPT.  GPT 5.6 Sol was also used to produce the initial draft, which the authors subsequently rewrote in full.  The authors remain responsible for all mathematical claims and exposition.

\section{Proof overview}\label{sec:overview}

This section gives a high-level overview of our proof techniques.

\paragraph{A common combinatorial recipe.}

Fix a perfect DRE of $f\colon\bin^n\to\bin$ with alphabet $[K]$. We first show that it yields a compact representation of local values of $f$.

Fix $x\in f^{-1}(1)$. For each $J\subseteq[n]$, let $x\oplus J$ be the string obtained by flipping the bits of $x$ in $J$. Under the same shared randomness, the encodings $Z^x$ and $Z^{x\oplus J}$ differ only in coordinates in $J$. Thus, once $Z^x$ is fixed, correctness (\Cref{def:intro-dre}) allows $f(x\oplus J)$ to be determined from $J$ and the replacement symbols on those coordinates. Privacy ensures that, for every $x'\in f^{-1}(1)$, one can choose randomness under which $Z^{x'}$ equals the same fixed transcript. The same local decoding rule therefore applies throughout the fiber $f^{-1}(1)$.

Applying the same argument to $f^{-1}(0)$ yields local decoders $D_{J,b}\colon[K]^J\to\bin$ and local symbols $L_i\colon\bin^n\to[K]$ such that
\begin{itemize}
\item $L_i(x)$ records the encoding symbol corresponding to $1-x_i$ under the chosen randomness;
\item $D_{J,b} \big ( (L_i(x))_{i\in J} \big ) = f(x\oplus J)$ for every $J\subseteq[n]$ and $x\in f^{-1}(b)$.
\end{itemize}

This local representation underlies several of our proofs: a small-alphabet DRE gives a succinct description of $f(x\oplus J)$ in local neighborhoods of $x$. We formalize the perfect case in \Cref{lem:local-labels} and extend the argument to statistical DREs in \Cref{sec:approx-random}.

\paragraph{Quadratic lower bounds for general Boolean functions.}

The local representation yields the quadratic lower bound for general Boolean functions through a counting argument.

Choose a set $W\subseteq\bin^n$ of pairwise Hamming distance at least $2d+1$, where $d=\Theta(1)$ and $|W|\ge2^n/\poly(n)$. Let $R\subseteq\bin^n$ consist of the points at distance exactly $d$ from $W$. The distance condition ensures that every point in $R$ has a unique representation $x\oplus J$, where $x\in W$ and $|J|=d$.

The construction above computes $f(x\oplus J)$ from $f(x)$, $D_{J,f(x)}$, and $L_i(x)$ for $i\in J$. The decoders $D_{J,b}$ may depend on $f$, $J$, and $b$, but not on $x$. As $f$ varies, there are
$$
2^{|R|}=2^{|W|\cdot\binom nd}
$$
possible truth tables on $R$. On the other hand, the number of local descriptions is at most
$$
\underbrace{2^{|W|}}_{\text{$f(x)$ for $x\in W$}}\cdot\underbrace{K^{n\cdot|W|}}_\text{$L_i(x)$ for $x\in W,i\in[n]$}\cdot\underbrace{2^{K^d\cdot2\cdot\binom nd}}_{\text{$D_{J,b}$'s for $b\in\bin,|J|=d$}}.
$$
For any fixed $d\ge2$, comparing these quantities and using $|W|\ge2^n/\poly(n)$ gives $\DRE(f)=n\log K=\Omega(n^2)$.

We give the formal proof of \Cref{thm:intro-random} in \Cref{sec:random-functions}. A robust version of the argument proves \Cref{thm:intro-approx-random}, the average-case quadratic lower bound for statistical DREs, in \Cref{sec:approx-random}.

\paragraph{Classification for symmetric functions with linear DRE.}

We next consider a symmetric function $f\colon\bin^n\to\bin$, writing $f(t)$ for its value on inputs of Hamming weight $t$. We show that if $f$ has a linear-size DRE---equivalently, one with constant alphabet---then $f$ must be periodic. We prove the perfect version, \Cref{thm:intro-symmetric}, in \Cref{sec:symmetric-classification} and its statistical extension, \Cref{thm:approx-symmetric}, in \Cref{sec:approx-classification}.

Both proofs have three steps.
\begin{itemize}
\item\textsc{Local periodicity.}
For every $t$ sufficiently far from $0$ and $n$, we show that $f$ is periodic between $t$ and $t+\Delta$, where $\Delta$ is an arbitrarily large constant.

Among $2K+1$ nearby weights, choose $K+1$ with the same function value $b$, and apply the local representation to their inputs $1^t0^{n-t}$. At each coordinate of their common zero tail, two of the $K+1$ local labels must coincide. By pigeonholing, some pair of weights $t_1,t_2$ has identical local labels on many coordinates. Flipping any $c$ of those coordinates invokes the same decoder $D_{J,b}$ on the same symbols, so $f(t_1+c)=f(t_2+c)$. This gives a period of length $|t_1-t_2|\le2K$.

\item\textsc{Almost periodicity.}
We then show that $f$ is periodic between $\ell=\Theta(1)$ and $r=n-\Theta(1)$.

This follows by patching together overlapping local periodic intervals. The intervals can begin anywhere away from the boundary and can be made arbitrarily long, allowing their periodic patterns to propagate.

\item\textsc{Full periodicity.}
Finally, we show that the Hamming weights near the boundary obey the periodic pattern established in the interior.

We reduce a boundary mismatch to the OR function. Since $f$ is periodic between $\ell=\Theta(1)$ and $r=n-\Theta(1)$, a mismatch below weight $\ell$ implies
$$
f(t)\ne f(t+a)=f(t+2\cdot a)=\cdots=f(t+m\cdot a),
$$
where $t=O(1)$ is the largest outlier weight below $\ell$ and $a=O(1)$ is the interior period. Standard restrictions and projections reduce this pattern to $\OR_m$ or its negation. For sufficiently large $n$, the resulting $m$ contradicts our superlinear DRE lower bound for OR, which we sketch next.
\end{itemize}

\paragraph{Superlinear lower bounds for OR.}

For OR, perfect privacy makes every nonzero input induce the same encoding distribution $\Pyes$, whose support is disjoint from that of the distribution $\Pno$ induced by $0^n$. Fix a reference string $q=(q_1,\ldots,q_n)$ in the support of $\Pyes$. For each $\emptyset\ne S\subseteq[n]$, we construct a string $p^S\in[K]^n$ satisfying
\begin{equation}\label{eq:overview-local-anti_0}
p^S_i=q_i\quad\text{if and only if}\quad i\notin S.
\end{equation}
To obtain $p^S$, let $e_S\in\bin^n$ be the indicator vector of $S$, choose randomness for which $Z^{e_S}=q$, and let $p^S$ be the corresponding encoding $Z^{0^n}$.

More importantly, these strings satisfy the anti-embedding property
$$
  (p_i^J)_{i\in J}\ne(p_i^L)_{i\in J}\quad\text{for every $\emptyset\ne J\subsetneq L$.}
$$
Otherwise, switching on the coordinates $L\setminus J$ would produce a string in the supports of both $\Pyes$ and $\Pno$, contradicting correctness.

We have thus produced $2^n-1$ strings over $[K]$ with the anti-embedding property. Extremal-combinatorial tools bound $n$ in terms of $K$, implying $\DRE(\OR_n)=\omega(n)$. Concretely, if $n$ were arbitrarily large, K\"onig's lemma (\Cref{lem:konig-infinity}) and the hypergraph Ramsey theorem (\Cref{lem:infinite-ramsey}) would yield an arbitrarily long sequence of strings over $[K]$ in which no string is a subsequence of another, contradicting Higman's lemma (\Cref{lem:higman}).

We formalize this argument for perfect DREs in \Cref{sec:or-lower-bounds}. Its finitary combinatorial form extends to statistical DREs in \Cref{sec:approx-classification}.

\paragraph{Structured optimal lower bound for OR.}
Finally, we outline the optimal $\Omega(n\log n)$ lower bound for OR under the symmetry assumption of \Cref{thm:intro-permutation-closed}; see \Cref{sec:structured-or} for the proof.

Assume that the randomized encoding $Z^{0^n}$ has symmetric support. Together with perfect correctness, this implies that there is no nonempty $S\subseteq[n]$ for which the multisets $\{Z_i^0\}_{i\in S}$ and $\{Z_i^1\}_{i\in S}$ are equal: otherwise, $Z^{0^n}$ would equal $Z^{e_S}$ up to a permutation.

If the alphabet is too small, however, such equal multisets are unavoidable. A greedy argument starts from a singleton $S$ and adds coordinates while maintaining a multiset difference of size $2$, stopping when the multisets agree or no coordinate can be added. Privacy prevents the symbol histograms of $(Z_i^b)_{i\in[n],b\in\bin}$ from being too skewed, so the process must end with equal multisets. The proof in \Cref{sec:structured-or} packages this argument as a topological ordering of an acyclic graph.

\section{Lower bounds for general Boolean functions}\label{sec:random-functions}

In this section, we prove \Cref{thm:intro-random}, restated below.

\thmintrorandom*

We begin by showing how a perfect DRE locally represents the truth table of $f$.

\begin{lemma}\label{lem:local-labels}
Fix an arbitrary $f\colon\bin^n\to\bin$ and an arbitrary perfect DRE of $f$ with alphabet $[K]$.
There exist functions
\begin{itemize}
\item $D_{J,b}\colon[K]^J\to\bin$, where $J\subseteq[n]$ and $b\in\bin$,
\item $L_i\colon\bin^n\to[K]$, where $i\in[n]$,
\end{itemize}
such that
\begin{equation}\label{eq:local-representation}
  f(x\oplus J)
  =D_{J,f(x)}\pbra{(L_i(x))_{i\in J}}
  \quad\text{holds for all $J\subseteq[n]$ and $x\in\bin^n$.}
\end{equation}
Here $x\oplus J$ is $x$ with bits in $J$ flipped.
\end{lemma}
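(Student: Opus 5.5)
The plan is to build the local representation by fixing, for each $b\in\bin$ with $f^{-1}(b)\ne\emptyset$, one reference transcript $\tau^b\in[K]^n$ in the common support of all $Z^x$ with $x\in f^{-1}(b)$. Such a $\tau^b$ exists by privacy: all these $Z^x$ are identically distributed, so we may take $\tau^b$ to be any point in the support of the common distribution. Now fix $x\in f^{-1}(b)$. Since $\Pr[Z^x=\tau^b]>0$, we can choose a realization $z=(z_i^0,z_i^1)_{i\in[n]}$ of the full tuple $(Z_i^c)_{i\in[n],c\in\bin}$ in its support with $z_i^{x_i}=\tau^b_i$ for all $i$. We then define $L_i(x):=z_i^{1-x_i}$, the symbol that coordinate $i$ would send on the flipped bit under this same randomness. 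This choice of $z$ depends on $x$, which is allowed.

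Next, we define the decoders. For $J\subseteq[n]$, $b\in\bin$, and $w\in[K]^J$, let $\sigma(J,b,w)\in[K]^n$ be the string equal to $w_i$ on $i\in J$ and to $\tau^b_i$ off $J$. Set $D_{J,b}(w)$ to be the unique $c$ such that $\sigma(J,b,w)$ lies in the support of $Z^{y}$ for some $y\in f^{-1}(c)$, and $0$ (say) if there is no such $c$. Uniqueness is exactly correctness: the supports of $Z^y$ and $Z^{y'}$ are disjoint whenever $f(y)\ne f(y')$. Crucially, $D_{J,b}$ depends only on $J$, $b$ and the fixed $\tau^b$, not on $x$.

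To verify \eqref{eq:local-representation}, take $x\in f^{-1}(b)$ and the realization $z$ chosen above. Under $z$, the encoding of $x\oplus J$ is $(z_i^{(x\oplus J)_i})_i$. This equals $z_i^{1-x_i}=L_i(x)$ on $J$ and $z_i^{x_i}=\tau^b_i$ off $J$, i.e.\ it is $\sigma(J,b,(L_i(x))_{i\in J})$. Since $z$ is in the support of the joint distribution, this string lies in the support of $Z^{x\oplus J}$, so $D_{J,b}$ returns $f(x\oplus J)$.

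The only delicate point, and the one I expect to be the main obstacle, is making sure a \emph{single} transcript $\tau^b$ can be realized simultaneously with every $x$ in the fiber. That is exactly what privacy (equality of distributions, hence of supports) supplies, and it is what lets $D_{J,b}$ be $x$-independent. Degenerate cases are harmless: if a fiber is empty, the corresponding $D_{J,b}$ is arbitrary, and $J=\emptyset$ is covered by setting $D_{\emptyset,b}\equiv b$, which is consistent with the definition above.
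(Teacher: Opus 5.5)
Your proposal is correct and is essentially the paper's proof. You fix a reference transcript in the common support of each fiber, realize it for each $x$, set $L_i(x)$ to be the flipped-bit symbol, and decode by checking support membership of the patched transcript. The only cosmetic differences are that you write $D_{J,b}$ as ``the unique value whose fiber's support contains the patched string'' rather than as the indicator of membership in the support of $\Pyes$, and that you handle constant $f$ uniformly instead of as a separate trivial case.
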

\begin{proof}
The statement is trivial if $f$ is a constant function. Hence we now assume $f^{-1}(0)$ and $f^{-1}(1)$ are both nonempty.

Recall \Cref{def:intro-dre}. All $Z^x$ with $f(x)=1$ (resp., $f(x)=0$) have the same distribution $\Pyes$ (resp., $\Pno$) over $[K]^n$.
We will construct $D_{J,1}$'s and $L_i(x)$ for $x\in f^{-1}(1),i\in[n]$; the complementary case is analogous.

Fix an arbitrary $z$ in the support of $\Pyes$.
For each $x\in f^{-1}(1)$, we fix a choice of $(Z_i^0,Z_i^1)_{i\in[n]}$ with $Z^x=z$ and define
$$
L_i(x)=Z_i^{1-x_i}.
$$
For each $J\subseteq[n]$, define
$$
D_{J,1}(u_J)=\indicator[(z_{[n]\setminus J},u_J)\text{ is in the support of }\Pyes],
$$
where $\indicator[\Ecal]$ is the indicator of the event $\Ecal$.
Then
\begin{align*}
D_{J,1}\pbra{(L_i(x))_{i\in J}}
&=\indicator[(z_{[n]\setminus J},(L_i(x))_{i\in J})\text{ is in the support of }\Pyes]\\
&=\indicator[(z_{[n]\setminus J},(Z_i^{1-x_i})_{i\in J})\text{ is in the support of }\Pyes]
\tag{by the definition of $L_i(x)$}\\
&=\indicator[((Z_i^{x_i})_{i\in [n]\setminus J},(Z_i^{1-x_i})_{i\in J})\text{ is in the support of }\Pyes]
\tag{by the choice of $Z^x=z$}\\
&=\indicator[(Z^{x\oplus J})\text{ is in the support of }\Pyes]
\tag{by the definition of $x\oplus J$}\\
&=f(x\oplus J).
\tag{by the correctness of \Cref{def:intro-dre}}
\end{align*}
This completes the proof.
\end{proof}

\Cref{lem:local-labels} compresses local neighborhoods: instead of recording every value $f(x\oplus J)$, it suffices to record the decoders $D_{J,b}$ and the labels $L_i(x)$. We now use this representation to prove \Cref{thm:intro-random}.

\begin{proof}[Proof of \Cref{thm:intro-random}]
First, if $n<1/c$, then the target lower bound is less than $n$, so any DRE violating the target bound must have a singleton alphabet. Such a DRE can encode only a constant function and the statement naturally holds.

Now assume $n\ge1/c$ and we will set $c>0$ sufficiently small such that $n$ is sufficiently large.
Assume for contradiction that $\DRE(f)\le n^2/4$ for all $f\colon\bin^n\to\bin$.
By padding dummy alphabet elements, this means every Boolean function has a perfect DRE with alphabet size $K=\lfloor2^{n/4}\rfloor$.

Let $W\subseteq\bin^n$ be a large set of strings with pairwise Hamming distance at least $5$. Note that $W$ can be constructed greedily with size
\begin{equation}\label{eq:random-code-size}
|W|=\Omega\pbra{\frac{2^n}{n^4}}.
\end{equation}
Define $R\subseteq\bin^n$ to be the set of strings of Hamming distance exactly $2$ from some string in $W$. By the definition of $W$, we have
\begin{equation}\label{eq:random-radius-two-size}
|R|=|W|\cdot\binom n2.
\end{equation}

We count the possible restrictions to $R$ of the truth tables of all functions $f$.
On the one hand, there are exactly
\begin{equation}\label{eq:random-truth-table-count}
2^{|R|}.
\end{equation}
On the other hand, we use \Cref{lem:local-labels} to provide an upper bound: for each $f$, we simply record (1) $D_{J,b}$ for $|J|=2$ and $b\in\bin$, (2) $L_i(x)$ for $i\in[n]$ and $x\in W$, and (3) $f(x)$ for $x\in W$.
This upper bounds the number of possibilities by
$$
\pbra{2^{K^2}}^{\binom n2\cdot2}\cdot\pbra{K^{|W|}}^n\cdot2^{|W|}.
$$
Combining with \Cref{eq:random-code-size,eq:random-radius-two-size,eq:random-truth-table-count} and $K=\lfloor2^{n/4}\rfloor$, we have
$$
\pbra{2^{2^{n/2}}}^{\binom n2\cdot2}\cdot\pbra{2^{|W|\cdot n/4}}^n\cdot2^{|W|}\ge2^{|W|\cdot\binom n2}
\quad\text{and}\quad
|W|=\Omega\pbra{\frac{2^n}{n^4}},
$$
which is impossible for $n$ sufficiently large.
\end{proof}

\section{Classification of symmetric functions}\label{sec:symmetric-classification}

Recall that for a symmetric function $f$, we write $f(t)$ for its value on inputs of Hamming weight $t$. We say that $f$ has period $a$ if $f(t)=f(t+a)$ for every $0\le t\le n-a$. This section proves \Cref{thm:intro-symmetric}, restated below.

\thmintrosymmetric*

The proof has three steps. First, we show that $f$ is periodic away from Hamming weights near $0$ and $n$. This is the content of \Cref{lem:periodic-core}, whose proof relies on the local representation in \Cref{lem:local-labels}.

\begin{restatable}{lemma}{lemperiodiccore}\label{lem:periodic-core}
For every integer $K\ge1$, there exist integers $B_K,a_K\ge1$ such that if a symmetric function $f\colon\bin^n\to\bin$ has a perfect DRE with alphabet $[K]$, then $f$ has period $a_K$ between Hamming weights $B_K$ and $n-B_K$. That is,
$$
f(t)=f(t+a_K)\quad\text{holds for all $B_K\le t\le n-B_K-a_K$.}
$$
\end{restatable}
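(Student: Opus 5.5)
The plan is a direct pigeonhole argument on the local labels of \Cref{lem:local-labels}, applied to the inputs $x^{(s)}:=1^s0^{n-s}$, which gives a short local period near each weight. I then take $a_K:=\lcm(1,2,\ldots,2K)$, which is a multiple of every local period $a\le 2K$ that can arise. Note that $f(u)=f(u+a_K)$ is a statement about the two weights $u$ and $u+a_K$ only. So it suffices to show that for each $u$ in the range, some window of weights containing $[u,u+a_K]$ is periodic with some period $a\le 2K$; iterating that period then gives $a_K$. No global gluing is needed, only windows of constant length $\Lambda:=2K+a_K$ located anywhere in $[B_K,n-B_K]$.

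\emph{Local step (zero tail).} Fix a weight $t$ and look at the $2K+1$ weights $t,t+1,\ldots,t+2K$. By pigeonhole, $K+1$ of them, say $s_0<\dots<s_K$, share a value $b=f(s_j)$. Let $T:=\{t+2K+1,\ldots,n\}$, which is the common zero tail of all $x^{(s_j)}$. For each $i\in T$, the $K+1$ labels $L_i(x^{(s_j)})\in[K]$ contain a collision. Charging each $i$ to one colliding pair $(s_j,s_{j'})$ and pigeonholing over the $\binom{K+1}{2}$ pairs gives a pair $s_1<s_2$ and a set $C\subseteq T$ with $|C|\ge (n-t-2K)/\binom{K+1}{2}$ on which the two label vectors agree. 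For any $J\subseteq C$ with $|J|=c$, flipping $J$ raises the weight by $c$. Then \eqref{eq:local-representation} gives
\[
f(s_1+c)=D_{J,b}\pbra{(L_i(x^{(s_1)}))_{i\in J}}=D_{J,b}\pbra{(L_i(x^{(s_2)}))_{i\in J}}=f(s_2+c),
\]
where we use symmetry of $f$. Hence $f$ has period $a=s_2-s_1\in[1,2K]$ on the window of weights $[s_1,\,s_2+|C|]$.

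\emph{Local step (one tail).} Symmetrically, I take the common one-prefix $\{1,\ldots,t\}$ of the inputs of weights $t,\ldots,t+2K$ and flip ones to zeros. This gives $f(s_1-c)=f(s_2-c)$ for $c\le|C'|$, with $|C'|\ge t/\binom{K+1}{2}$. So $f$ has period $s_2-s_1\le 2K$ on $[s_1-|C'|,\,s_2]$.

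\emph{Covering.} Given $u$ with $B_K\le u\le n-B_K-a_K$, I use the zero tail when $u\le n/2$ and the one tail otherwise.

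In the zero-tail case I set $t:=u$. The window $[s_1,s_2+|C|]$ starts at $s_1\le u+2K$. Its period $a$ divides $a_K$, so it also has period $a_K$ on $[s_1,s_2+|C|]$. To transport $f(u)=f(u+a_K)$ into that window, I instead choose $t:=u-2K$, which forces $s_1\le u$. It then suffices that $|C|\ge \Lambda$. This holds once $n-u\ge \binom{K+1}{2}\Lambda+4K$, which is automatic for $u\le n/2$ once $n$ is a large constant; if $n$ is small the range is empty anyway.

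In the one-tail case I choose $t:=u+a_K$, so that $s_2\ge u+a_K$. It suffices that $|C'|\ge\Lambda$, which holds when $u\ge \binom{K+1}{2}\Lambda$.

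So I set $B_K:=\binom{K+1}{2}(2K+a_K)+4K$.

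\emph{Main obstacle.} The only delicate step is this bookkeeping: making sure the window produced from $t$ actually contains both $u$ and $u+a_K$, given that $s_1,s_2$ are only known up to $2K$. The boundary cases must also stay in range, for instance $t\ge0$ and $T$ nonempty. The substance of the argument is the label-collision step, which is straightforward given \Cref{lem:local-labels}.
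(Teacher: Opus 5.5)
Your proof is correct and is essentially the paper's argument: it uses the same zero-tail pigeonhole on the labels from \Cref{lem:local-labels} with $a_K=\lcm(1,\ldots,2K)$, and the paper's ``gluing'' of the windows $[\ell_u,r_u]$ (each of length at least $a_K+1$, with $\ell_u=u+2K$ sweeping all relevant weights) is exactly your observation that each pair $(u,u+a_K)$ lies inside a single window. Your one-tail case is unnecessary: $u\le n-B_K-a_K$ already gives the zero tail $\{u+1,\ldots,n\}$ (used with $t=u-2K$) size at least $B_K+a_K\ge\binom{K+1}{2}(2K+a_K)$.
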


Second, we reduce any failure of this pattern at the boundary to the OR function. We use the following closure properties of perfect DREs.

\begin{restatable}{lemma}{lemblockid}\label{lem:block-identification}
Assume $f\colon\bin^n\to\bin$ has a perfect DRE with alphabet $[K]$.
\begin{itemize}
\item \textsc{Rearranging.}
For any permutation $\pi\colon[n]\to[n]$, $f_\pi\colon\bin^n\to\bin$ has a perfect DRE with alphabet $[K]$, where
$$
f_\pi(x_1,\ldots,x_n):=f(x_{\pi(1)},\ldots,x_{\pi(n)}).
$$
\item \textsc{Reverse.}
$1-f$ has a perfect DRE with alphabet $[K]$.
\item \textsc{Negation.}
$g\colon\bin^n\to\bin$ has a perfect DRE with alphabet $[K]$, where
$$
g(x_1,x_2,\ldots,x_n):=f(1-x_1,x_2,\ldots,x_n).
$$
\item \textsc{Restriction.}
$h\colon\bin^{n-1}\to\bin$ has a perfect DRE with alphabet $[K]$, where
$$
h(x_1,\ldots,x_{n-1}):=f(x_1,\ldots,x_{n-1},1).
$$
\item \textsc{Projection.}
Partition $[n]$ arbitrarily into $m$ sets $E_1,E_2,\ldots,E_m$, each of size at most $d$.

Then $F\colon\bin^m\to\bin$ has a perfect DRE with alphabet $[K^d]$, where
$$
F(x_1,\ldots,x_m):=f(y_1,\ldots,y_n)
\quad\text{and}\quad
\text{$y_i=x_j$ if $i\in E_j$.}
$$
\end{itemize}
\end{restatable}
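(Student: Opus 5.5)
Each closure property is proved by building the new DRE directly from the given variables $(Z_i^b)_{i\in[n],b\in\bin}$ with alphabet $[K]$, and then checking correctness and privacy against \Cref{def:intro-dre}. In every case the check reduces to one observation: the new encoding of an input $x$ equals, coordinate by coordinate, a fixed relabeling or rearrangement of the old encoding $Z^{y}$ of a suitable input $y$. Here the relabeling is a fixed injective map $\phi$, and $y$ satisfies $f(y)=(\text{new function})(x)$. Injective maps preserve both equality in distribution and disjointness of supports, so both conditions transfer at once.

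\textsc{Rearranging.} Set $W_j^b:=Z_{\pi(j)}^b$. Then $W^x$ is $Z^{y}$ with coordinates permuted, where $y_{\pi(j)}=x_j$, so that $f(y)=f_\pi(x)$.

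\textsc{Reverse.} Keep the same variables. Swapping $f$ with $1-f$ swaps the two preimage classes, and the conditions ``same value implies same distribution'' and ``different values imply disjoint supports'' are invariant under the swap.

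\textsc{Negation.} Set $W_1^b:=Z_1^{1-b}$ and $W_i^b:=Z_i^b$ for $i\ge2$. Then $W^x=Z^{x\oplus\{1\}}$, which encodes $g(x)=f(x\oplus\{1\})$.

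\textsc{Restriction.} Keep $(Z_i^b)_{i\le n-1}$ as the new variables, so that $W^x$ is the projection of $Z^{(x,1)}$ onto its first $n-1$ coordinates. Privacy survives projection, since equal distributions have equal marginals. Correctness needs care, because projecting supports could create overlaps. The way around this is that the dropped coordinate is the same random variable $Z_n^1$ in both encodings. Suppose $h(x)\ne h(x')$ and some $w$ lies in both projected supports. I then only know that $(w,s)$ lies in the support of $Z^{(x,1)}$ and $(w,s')$ lies in the support of $Z^{(x',1)}$, possibly with $s\ne s'$, so no contradiction follows yet.

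The fix is to condition. Choose $s$ with $\Pr[Z_n^1=s]>0$ and use the variables $(Z_i^b)_{i\le n-1}$ conditioned on the event $Z_n^1=s$. Correctness is then inherited, because $(w,s)$ in both conditional supports means $(w,s)$ lies in both original supports, a contradiction. For privacy, if $h(x)=h(x')$ then $Z^{(x,1)}$ and $Z^{(x',1)}$ have the same distribution. Their conditional distributions given that the last coordinate is $s$ therefore also agree, since the event has the same positive probability under both. This conditioning is the only genuine subtlety, and I expect it to be the main obstacle.

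\textsc{Projection.} For each $j\in[m]$ and $b\in\bin$, set $W_j^b:=(Z_i^b)_{i\in E_j}$, padding with a fixed symbol up to length $d$ and viewing the result as a symbol in $[K^d]$. Then $W^x$ is an injective repackaging of $Z^{y}$, where $y_i=x_j$ for $i\in E_j$, and $F(x)=f(y)$. Both conditions follow directly.
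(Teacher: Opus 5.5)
Your proposal is correct and follows the paper's proof essentially line by line, including the key step for \textsc{Restriction}: you condition on a fixed positive-probability value of $Z_n^1$. One small indexing slip: with $W_j^b:=Z_{\pi(j)}^b$, the encoding $W^x$ is a rearrangement of $Z^y$ with $y_i=x_{\pi^{-1}(i)}$, so it encodes $f_{\pi^{-1}}(x)$ rather than $f_\pi(x)$, and you should set $W_j^b:=Z_{\pi^{-1}(j)}^b$ as the paper does (harmless, since $\pi$ is arbitrary).
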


The final ingredient is the following lower bound for OR.

\begin{restatable}{theorem}{thmgeneralor}\label{thm:general-or}
For every integer $K\ge1$, there exists an integer $n_K\ge1$ such that if $\OR_n$ has a perfect DRE with alphabet $[K]$, then $n\le n_K$.
Consequently, $\DRE(\OR_n)=\omega(n)$.
\end{restatable}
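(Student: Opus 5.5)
We follow the route sketched in \Cref{sec:overview}: turn a perfect DRE of $\OR_n$ with alphabet $[K]$ into a large family of strings over $[K]$ with an anti-embedding property. We then show that such families cannot exist for unbounded $n$, using K\"onig's lemma, infinite Ramsey, and Higman's lemma.

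\textbf{Step 1 (combinatorial extraction).} Fix $q$ in the support of $\Pyes$. For each nonempty $S\subseteq[n]$, perfect privacy gives randomness under which $Z^{e_S}=q$; let $p^S$ be the corresponding $Z^{0^n}$ under that same randomness. Since $Z^{e_S}$ and $Z^{0^n}$ differ only in the coordinates of $S$, we have $p^S_i=q_i$ for $i\notin S$.

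For $i\in S$, suppose $p^S_i=q_i$. Then $Z^{0^n}=Z^{e_{S\setminus\{i\}}}$ under this randomness. If $S\setminus\{i\}\neq\emptyset$, this string lies in both supports, contradicting correctness. If $S=\{i\}$, then $p^S=q$ itself lies in the support of $\Pno$, again a contradiction.

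For the anti-embedding property, take $\emptyset\neq J\subsetneq L$ and suppose $(p^J_i)_{i\in J}=(p^L_i)_{i\in J}$. Under the randomness defining $p^L$, consider the input $e_{L\setminus J}$. Its encoding agrees with $p^L$ on $J$ and outside $L$, and equals $q$ on $L\setminus J$. Hence it equals $p^J$ on $J$ and $q$ elsewhere, which is exactly $p^J$. So $p^J$ lies in the support of $\Pyes$, since $L\setminus J\neq\emptyset$ makes $e_{L\setminus J}$ a yes-input. But $p^J$ also lies in the support of $\Pno$, a contradiction.

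I would record this as a lemma: a ``$K$-anti family of order $n$'' is an assignment $S\mapsto p^S\in[K]^n$ satisfying (i) $p^S_i\neq q_i$ exactly when $i\in S$, and (ii) anti-embedding on $J$.

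\textbf{Step 2 (compactness).} Suppose such families exist for infinitely many $n$. Restricting to the first $m$ coordinates preserves (i) and (ii) for subsets of $[m]$, so families of order $n$ restrict to families of order $m<n$. Over a fixed $K$ there are finitely many families of each order, so K\"onig's lemma (\Cref{lem:konig-infinity}) gives a single infinite family indexed by finite nonempty subsets of $\Nbb$. This uses that $q$ ranges over a finite alphabet coordinatewise; one could also first normalize by relabeling symbols coordinatewise so that $q$ is constant.

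\textbf{Step 3 (Ramsey plus Higman), the main obstacle.} We need an infinite sequence of finite words with no word embedding in a later one, i.e., a bad sequence contradicting Higman's lemma (\Cref{lem:higman}).

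First color pairs and triples via \Cref{lem:infinite-ramsey} so that, on an infinite set $A=\{a_1<a_2<\cdots\}$, the value $p^{\{a_j,a_k\}}_{a_j}$ for $j<k$ depends only on nothing else; that is, it is homogeneous, and similarly for other fixed-arity patterns. Then consider words $w_k=(p^{S_k}_i)_{i\in S_k}$ for suitably chosen sets $S_k\subseteq A$, with the $S_k$ arranged so that an embedding of $w_j$ into $w_k$ can be transported by homogeneity into an equality $(p^{J}_i)_{i\in J}=(p^{L}_i)_{i\in J}$ with $J\subsetneq L$, violating (ii).

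The delicate point is that a Higman embedding matches positions of $S_j$ to an arbitrary increasing subset of $S_k$, not to $S_j$ itself. Order-type homogeneity must therefore let us replace $p^{S_j}$ by $p^{J}$ for the image set $J\subseteq S_k$. Since $|S_j|$ grows, finitely many Ramsey applications do not suffice directly. I would try making the sets $S_k$ of bounded-by-construction type, for instance $J$ of fixed size $r$ inside $L$ of size $r+1$, using each arity once. Alternatively I would encode the finite data as a word over the finite alphabet $[K]\times[K]$, pairing $p^L_i$ with $p^{L\setminus\{\text{last}\}}_i$, so that Higman applies directly.

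Once Step 3 yields a contradiction, $n_K$ exists. Then $\DRE(\OR_n)=n\log K_n$ with $K_n\to\infty$, which is $\omega(n)$.
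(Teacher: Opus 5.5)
Your Steps 1 and 2 match the paper (\Cref{lem:local-anti} and the K\H{o}nig argument in the proof of \Cref{lem:nonexist_anti}), up to one slip in your proof of (i). If $p^S_i=q_i$ for some $i\in S$, then $Z_i^0=Z_i^1$ under $\lambda_S$, so $Z^{0^n}=Z^{e_{\{i\}}}$. It does not follow that $Z^{0^n}=Z^{e_{S\setminus\{i\}}}$; that string is just $q$. The fix is \Cref{lem:pointwise-separation}, and in any case only the direction $i\notin S\Rightarrow p^S_i=q_i$ is used later.

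The genuine gap is Step 3. It is the heart of the theorem, and you leave it as a list of things you ``would try.'' Neither fallback reaches a contradiction. Your concrete instance, $J$ of size $r$ inside $L$ of size $r+1$, only constrains consecutive lengths, whereas Higman produces an arbitrary pair $d'<d$. The $[K]\times[K]$ pairing gives no mechanism for turning an embedding into an equality $p^J_J=p^L_J$. Moreover, the obstacle you identify (``finitely many Ramsey applications do not suffice'') is not actually an obstacle.

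The missing idea is to colour each $s$-subset $S$ by the single word $\chi(S)=p^S_S\in[K]^s$, which takes $K^s$ colours. Apply infinite Ramsey once for each $s=1,2,\ldots$, nesting $H_s\subseteq H_{s-1}$, to obtain a homogeneous colour $c_s$.

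Countably many sequential applications are harmless: the sequence $c_1,c_2,\ldots$ is well defined, and Higman then gives $d'<d$ with $c_{d'}$ a subsequence of $c_d$ at positions $j_1<\cdots<j_{d'}$. Take any $S=\{i_1<\cdots<i_d\}\subseteq H_d$ and set $S'=\{i_{j_1},\ldots,i_{j_{d'}}\}$.

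The ``transport'' you worried about is then automatic. Since $S'\subseteq H_d\subseteq H_{d'}$, homogeneity at size $d'$ gives $p^{S'}_{S'}=c_{d'}$. Meanwhile $p^S_{S'}$ is $c_d$ read at positions $j_1,\ldots,j_{d'}$, which is also $c_{d'}$. Thus $p^{S'}_{S'}=p^S_{S'}$ with $\emptyset\ne S'\subsetneq S$, contradicting (ii).

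No set homogeneous for all arities at once is needed, and no order-type bookkeeping beyond this.
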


We prove \Cref{lem:periodic-core}, \Cref{lem:block-identification}, and \Cref{thm:general-or} in \Cref{sec:fixed-alphabet-periodicity,sec:reduction,sec:or-lower-bounds}, respectively.
We now combine these ingredients to prove the classification theorem.

\begin{proof}[Proof of \Cref{thm:intro-symmetric}]
Define
\begin{equation}\label{eq:thm:intro-symmetric_1}
N_K=a_K\cdot\pbra{1+n_{K^{a_K}}}+2\cdot B_K+a_K.
\end{equation}
Here $n_K$ comes from \Cref{thm:general-or}, while $a_K$ and $B_K$ come from \Cref{lem:periodic-core}.

Assume for contradiction that $f$ does not have period $a_K$. By \Cref{lem:periodic-core}, the mismatch can only come from boundary Hamming weights, leading to the following cases.

\paragraph{Mismatch from small weights.}
The periodic interior defines a value for every residue modulo $a_K$. In a residue class with an exception below $B_K$, choose $w$ to be the largest such exception, and put $m=\lfloor(n-B_K-w)/a_K\rfloor$. Then $0\le w<B_K$ and
$$
f(w)\ne f(w+a_K)=f(w+2\cdot a_K)=\cdots=f(w+m\cdot a_K),
$$
where $m\ge n_{K^{a_K}}+1$ by \Cref{eq:thm:intro-symmetric_1}.
We will show how to obtain a perfect DRE for $\OR_m$.
By \Cref{lem:block-identification}, it suffices to construct $\OR_m$ from $f$ as follows.
\begin{itemize}
\item Using \textsc{Restriction} and \textsc{Negation} rules on $f$, we fix $w$ ones and $n-w-a_K\cdot m$ zeros, which gives a symmetric $f'\colon\bin^{m\cdot a_K}\to\bin$ satisfying $f'(0)\ne f'(a_K)=f'(2\cdot a_K)=\cdots=f'(m\cdot a_K)$.
\item Using \textsc{Projection} rule on $f'$, we partition the input bits into blocks of size $a_K$, which gives a symmetric $f''\colon\bin^m\to\bin$ satisfying $f''(0)\ne f''(1)=f''(2)=\cdots=f''(m)$.
\item At this point, $f''$ is either $\OR_m$ or $1-\OR_m$, and we simply apply \textsc{Reverse} rule if necessary.
\end{itemize}
By \Cref{lem:block-identification}, we obtain a perfect DRE for $\OR_m$ with alphabet $[K^{a_K}]$, which contradicts \Cref{thm:general-or}.

\paragraph{Mismatch from large weights.}
Complementing every input bit replaces the weight sequence $f(t)$ by $f(n-t)$ and preserves the common alphabet by \Cref{lem:block-identification}. A boundary exception above $n-B_K$ becomes one below $B_K$, reducing to the preceding case.
\end{proof}

\subsection{Almost periodicity}\label{sec:fixed-alphabet-periodicity}

To prove \Cref{lem:periodic-core}, we directly define
\begin{equation}\label{eq:sec:fixed-alphabet-periodicity_1}
a_K=\lcm(1,2,\ldots,2K).
\end{equation}

We begin by establishing local periodicity.

\begin{lemma}\label{lem:local-periodicity}
Let $f\colon\bin^n\to\bin$ be symmetric and have a perfect DRE over $[K]$.
For every $0\le u\le n-2K$, $f$ has period $a_K$ between Hamming weights $\ell$ and $r$, where
\[
  \ell=u+2K
  \quad\text{and}\quad
  r=u+1+\frac{n-u-2K}{\binom{K+1}2}.
\]
Consequently, if $0\le u\le n-2K-\binom{K+1}2\cdot\pbra{2K+a_K}$, then $r\ge\ell+a_K+1$.
\end{lemma}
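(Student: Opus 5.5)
We follow the local-periodicity sketch from \Cref{sec:overview}, using \Cref{lem:local-labels}. Fix $u$ and look at the $2K+1$ weights $u,u+1,\ldots,u+2K$. By pigeonhole, at least $K+1$ of them share a common value $b$ of $f$; call them $t_0<t_1<\cdots<t_K$. For each $j$ put $x^{(j)}=1^{t_j}0^{n-t_j}$, so $f(x^{(j)})=b$. Every $x^{(j)}$ is zero on the common tail $T=\{u+2K+1,\ldots,n\}$, and $|T|=n-u-2K$. For each $i\in T$, the $K+1$ labels $L_i(x^{(0)}),\ldots,L_i(x^{(K)})$ lie in $[K]$, so two of them coincide. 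Each $i\in T$ therefore picks a pair $\{j,j'\}$ out of $\binom{K+1}2$ possible pairs. Pigeonholing again, some pair $(t_j,t_{j'})$ with $t_j<t_{j'}$ has identical labels on a set $S\subseteq T$ of size
\[
|S|\ge\frac{n-u-2K}{\binom{K+1}2}.
\]

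Next we transfer values along this pair. Take any $c\le|S|$ and any $J\subseteq S$ with $|J|=c$. Both $x^{(j)}$ and $x^{(j')}$ have zeros on $J$, so $x^{(j)}\oplus J$ has weight $t_j+c$ and $x^{(j')}\oplus J$ has weight $t_{j'}+c$. Applying \eqref{eq:local-representation} with the same decoder $D_{J,b}$ and the same labels gives $f(t_j+c)=f(t_{j'}+c)$ for all $0\le c\le|S|$. Set $p=t_{j'}-t_j$. Then $1\le p\le 2K$, and $f(s)=f(s+p)$ holds for every $s$ with $t_j\le s\le t_j+|S|$.

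Now we convert period $p$ into period $a_K$. Since $p\mid a_K$, iterating the relation gives $f(s)=f(s+a_K)$ whenever the whole chain $s,s+p,\ldots,s+a_K-p$ stays inside $[t_j,t_j+|S|]$. The intended reading of "period $a_K$ between $\ell$ and $r$" is that the restriction of $f$ to weights in $[\ell,r]$ is $a_K$-periodic. To get this I plan to show the stronger fact that $f$ restricted to $[\ell,r]$ is $p$-periodic, which implies $a_K$-periodicity there since $p\mid a_K$. For that, two containments are needed. First, $\ell=u+2K\ge t_j$, which holds because $t_j\le u+2K$. Second, $r\le t_j+|S|+p$, which follows from $t_j\ge u$, $|S|\ge (n-u-2K)/\binom{K+1}2$, and $p\ge1$: these give $t_j+|S|+p\ge u+1+(n-u-2K)/\binom{K+1}2=r$. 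So $p$-periodicity holds on the window $[t_j,t_j+|S|+p]$, which contains $[\ell,r]$. The main thing to get right is pinning down this precise meaning of "period between $\ell$ and $r$" so that the off-by-one bookkeeping with $u+1$ comes out exactly.

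Finally, the "consequently" part is arithmetic. If $n-u-2K\ge\binom{K+1}2(2K+a_K)$, then $r\ge u+1+2K+a_K=\ell+a_K+1$.
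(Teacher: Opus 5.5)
Your proposal is correct and follows essentially the same argument as the paper. Both pigeonhole to get $K+1$ weights with a common value $b$, then pigeonhole on colliding label pairs over the common zero tail to get a pair agreeing on a set of size at least $(n-u-2K)/\binom{K+1}2$, transfer $f(t_j+c)=f(t_{j'}+c)$ through the shared decoder $D_{J,b}$, and finally pass from the period $p=t_{j'}-t_j\le 2K$ to its multiple $a_K$ on the window $[t_j,t_{j'}+|S|]\supseteq[\ell,r]$. Your explicit chaining argument and the containment check $r\le t_j+|S|+p$ just spell out what the paper states in one line.
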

\begin{proof}
The statement is vacuously true if $n\le2K$; hence we assume $n\ge2K+1$.
Among the $2K+1$ weights between $u$ and $u+2K$, at least $K+1$ have the same function value in $f$.
Choose
\begin{equation}\label{eq:lem:local-periodicity_0}
  u\le s_0<s_1<\cdots<s_K\le u+2K
\end{equation}
and $b\in\bin$ such that $f(s_j)=b$ for every $j$.
For each $t\in\{s_0,\ldots,s_K\}$, let $x^{(t)}=1^t0^{n-t}$. Applying \Cref{lem:local-labels}, we obtain functions
\begin{itemize}
    \item $D_{J,b}\colon[K]^J\to\bin$, where $J\subseteq[n]$,
    \item $L_i\colon\bin^n\to[K]$, where $i\in[n]$,
\end{itemize}
such that
\begin{equation}\label{eq:lem:local-periodicity_1}
f(x^{(t)}\oplus J)=D_{J,b}\pbra{(L_i(x^{(t)}))_{i\in J}}
\quad\text{holds for all $J\subseteq[n]$ and $t\in\{s_0,\ldots,s_K\}$.}
\end{equation}
Now for each $u+2K+1\le i\le n$, the $K+1$ symbols $L_i(x^{(s_0)}),\ldots,L_i(x^{(s_K)})$ must collide.
Hence there exist $H\subseteq\{u+2K+1,\ldots,n-1,n\}$ of size
\begin{equation}\label{eq:lem:local-periodicity_2}
|H|\ge\frac{n-u-2K}{\binom{K+1}2}
\end{equation}
and some distinct $s,s'\in\{s_0,\ldots,s_K\}$ such that
\begin{equation}\label{eq:lem:local-periodicity_3}
L_i(x^{(s)})=L_i(x^{(s')})
\quad\text{for all $i\in H$.}
\end{equation}
Since $H\subseteq\{u+2K+1,\ldots,n-1,n\}$, we know that both $x^{(s')}$ and $x^{(s)}$ are all-zero on coordinates in $H$.
Hence for every $0\le p\le|H|$, we can choose $J\subseteq H$ of size $|J|=p$ and obtain
\begin{align*}
f(s+p)
&=f(x^{(s)}\oplus J)
=D_{J,b}\pbra{(L_i(x^{(s)}))_{i\in J}}
\tag{by \Cref{eq:lem:local-periodicity_1}}\\
&=D_{J,b}\pbra{(L_i(x^{(s')}))_{i\in J}}
\tag{by \Cref{eq:lem:local-periodicity_3}}\\
&=f(x^{(s')}\oplus J)=f(s'+p).
\tag{by \Cref{eq:lem:local-periodicity_1}}
\end{align*}
Without loss of generality, assume $s'<s$.
This means $f$ has period $s-s'$ between weights $s'$ and $s+|H|$.
Since $1\le s-s'\le 2K$, it divides $a_K$ by \Cref{eq:sec:fixed-alphabet-periodicity_1}. We may therefore take the period to be $a_K$.
In addition, \Cref{eq:lem:local-periodicity_0,eq:lem:local-periodicity_2} and our choice of $\ell,r$ give $s'\le\ell$ and $r\le s+|H|$, which completes the proof.
\end{proof}

We now combine the local intervals to prove \Cref{lem:periodic-core}.

\begin{proof}[Proof of \Cref{lem:periodic-core}]
Fix $K$ and set
$$
B_K=2K+\binom{K+1}2\cdot\pbra{2K+a_K}.
$$
The statement is vacuously true if $n\le B_K$; hence we assume $n\ge B_K$.

For each $0\le u\le n-B_K$, we apply \Cref{lem:local-periodicity} and obtain that $f$ has period $a_K$ between weights $\ell_u$ and $r_u$ where
\begin{equation}\label{eq:lem:periodic-core_1}
  \ell_u=u+2K
  \quad\text{and}\quad
  r_u\ge\ell_u+a_K+1
\end{equation}
Since $[\ell_u,r_u]\cap [\ell_{u+1},r_{u+1}]$ contains the integer interval $[u+2K+1,u+2K+a_K+1]$, consisting of $a_K+1$ weights, the period pattern extends.
As a result, $f$ has period $a_K$ between
$$
\ell_0=2K\le B_K
\quad\text{and}\quad
r_{n-B_K}\ge n-B_K+2K+a_K+1\ge n-B_K
$$
as desired.
\end{proof}

\subsection{Reduction between perfect DREs}\label{sec:reduction}

This subsection proves \Cref{lem:block-identification}, restated below.

\lemblockid*

\begin{proof}
Fix a perfect DRE of $f$ as in \Cref{lem:block-identification}.
Let $\Lambda$ be the probability space, and assume that each $\lambda\in\Lambda$ has nonzero probability mass, so each $\lambda\in\Lambda$ determines $Z(\lambda)=(Z_i^b(\lambda))_{i\in[n],b\in\bin}$ in \Cref{def:intro-dre}.
As such, the DRE of $f$ given input $x$ is computed by first drawing $\lambda\in\Lambda$ and then outputting $Z^x=Z^x(\lambda)$ deterministically.

Now we handle each reduction separately and provide the corresponding DRE $\widetilde Z$.

\paragraph{Rearranging.}
We rearrange $\widetilde Z^b_j=Z^b_{\pi^{-1}(j)}$ as a perfect DRE of $f_\pi$.

\paragraph{Reverse.}
We directly set $\widetilde Z=Z$ as a perfect DRE of $1-f$.

\paragraph{Negation.}
Set $\widetilde Z_1^b=Z_1^{1-b}$ and $\widetilde Z_i^b=Z_i^b$ for $i\ge2$.
Then $\widetilde Z$ is a perfect DRE of $g$.

\paragraph{Restriction.}
Fix an arbitrary $\lambda_\star\in\Lambda$ and define $q_\star=Z_n^1(\lambda_\star)$.
Then for each $\lambda\in\Lambda$ with $Z_n^1(\lambda)=q_\star$, define $\widetilde Z^b_i(\lambda)=Z_i^b(\lambda)$ for $i\in[n-1]$.
Then $\widetilde Z$ is a perfect DRE of $h$: given input $(x_1,\ldots,x_{n-1})$, the DRE is computed by first drawing $\lambda\in\Lambda$ conditioned on $Z_n^1(\lambda)=q_\star$, and then outputting $\widetilde Z(\lambda)$.

To see that $\widetilde Z$ is a perfect DRE, we verify conditions in \Cref{def:intro-dre} directly.
\begin{itemize}
\item If $h(x_1,\ldots,x_{n-1})=h(x'_1,\ldots,x'_{n-1})$, then $Z^{(x_1,\ldots,x_{n-1},1)}$ and $Z^{(x'_1,\ldots,x'_{n-1},1)}$ are identically distributed. Conditioning both on their last coordinate being $q_\star$ preserves this equality in distribution. After deleting that coordinate, the resulting distributions are precisely $\widetilde Z^{(x_1,\ldots,x_{n-1})}$ and $\widetilde Z^{(x'_1,\ldots,x'_{n-1})}$.
\item If the two values of $h$ differ, correctness implies that $Z^{(x_1,\ldots,x_{n-1},1)}$ and $Z^{(x'_1,\ldots,x'_{n-1},1)}$ have disjoint supports. Because their last coordinates both equal $q_\star$ after conditioning, their first $n-1$ coordinates must differ. Thus the corresponding values of $\widetilde Z$ also have disjoint supports.
\end{itemize}

\paragraph{Projection.}
For each $E_j$ and $b\in\bin$, group the corresponding local messages into
\[
  \widetilde Z_j^b=(Z_i^b)_{i\in E_j}.
\]
This is a string in $[K]^{|E_j|}$ and we naturally embed it as a number in $[K^d]$ since $|E_j|\le d$.

To see that $\widetilde Z$ is a perfect DRE of $F$, for each $x\in\bin^m$ we observe that $\widetilde Z^x$ is simply another way of writing $Z^y$, where $y\in\bin^n$ is given by $y_i=x_j$ when $i\in E_j$. Hence in light of \Cref{def:intro-dre}, all desired properties of $\widetilde Z$ follow directly from those of $Z$.
\end{proof}

\subsection{Lower bounds for OR}\label{sec:or-lower-bounds}

We now prove the remaining ingredient for \Cref{thm:intro-symmetric}: a superlinear DRE lower bound for $\OR_n$.

\thmgeneralor*

\Cref{thm:general-or} follows from two lemmas. The first extracts a local anti-embedding family from a perfect DRE, and the second gives a purely combinatorial bound on the dimension of such a family.

\begin{lemma}\label{lem:anti-embedding_family}
Assume $\OR_n$ has a perfect DRE over alphabet $[K]$.
Then there exists a family of strings $p^S\in[K]^n$, indexed by $S\subseteq[n]$, such that the following holds.
\begin{itemize}
\item $p^J_J\ne p_J^L$ for every $\emptyset\ne J\subsetneq L\subseteq[n]$.
\item $p^S_i=p^\emptyset_i$ for every $S\subseteq[n]$ and $i\notin S$.
\end{itemize}
\end{lemma}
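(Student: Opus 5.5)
The plan is to follow the construction sketched in \Cref{sec:overview}, with the roles of $\Pyes$ and $\Pno$ chosen so that the index set is all of $[n]$, including $S=\emptyset$. If $n=0$ there is nothing to prove, so assume $n\ge1$. Then $\OR_n$ is nonconstant, and by perfect privacy every nonzero input $x$ induces the same distribution $\Pyes$. The input $0^n$ induces $\Pno$, and by correctness the supports of $\Pyes$ and $\Pno$ are disjoint.

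\textbf{Construction.} Fix an arbitrary string $q$ in the support of $\Pno$, and set $p^\emptyset=q$. For a nonempty $S$, write $e_S$ for the indicator vector of $S$. Since $Z^{0^n}$ and $Z^{e_S}$ have the same distribution, we can fix randomness $\lambda_S$ with $Z^{e_S}(\lambda_S)=q$. Now define
\[
p^S=Z^{0^n}(\lambda_S)\in[K]^n .
\]
For $i\notin S$ the two inputs $e_S$ and $0^n$ agree at $i$. Hence $p^S_i=Z_i^0(\lambda_S)=q_i=p^\emptyset_i$, which is the second property. Note also that $p^S$ lies in the support of $\Pno$, and for $i\in S$ we have $Z_i^1(\lambda_S)=q_i$.

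\textbf{Anti-embedding.} Suppose $\emptyset\ne J\subsetneq L$ and $p^J_J=p^L_J$. Work under $\lambda_L$ and consider the input $x=e_{L\setminus J}$, which is nonzero because $J\ne L$. Its encoding $Z^x(\lambda_L)$ is computed coordinate by coordinate.
\begin{itemize}
\item For $i\in L\setminus J$, the coordinate is $Z_i^1(\lambda_L)=q_i$.
\item For $i\notin L$, the coordinate is $Z_i^0(\lambda_L)=q_i$.
\item For $i\in J$, the coordinate is $Z_i^0(\lambda_L)=p^L_i=p^J_i$.
\end{itemize}
Hence $Z^x(\lambda_L)$ coincides with $p^J$. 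To see this, note that $p^J$ agrees with $q$ outside $J$ and equals $p^J_J$ on $J$.

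The two sides then land in disjoint supports. Since $x\neq 0^n$, the string $Z^x(\lambda_L)$ lies in the support of $\Pyes$. On the other hand, $p^J=Z^{0^n}(\lambda_J)$ lies in the support of $\Pno$. This contradicts correctness, so $p^J_J\ne p^L_J$.

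\textbf{Main obstacle.} The only delicate point is choosing the reference $q$ from $\Pno$ rather than from $\Pyes$, as the overview does. This choice lets $p^\emptyset=q$ fit the indexing over all $S\subseteq[n]$. It also makes the flipped coordinates $L\setminus J$ land on $q$, so the flipped encoding reproduces exactly $p^J$. Everything else is direct verification from the definitions.
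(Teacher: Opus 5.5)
\textbf{Verdict: there is a genuine gap, and it is fatal as written.} Your construction fails at its first step. You fix $q$ in the support of $\Pno$ and then claim that, because $Z^{0^n}$ and $Z^{e_S}$ have the same distribution, there is $\lambda_S$ with $Z^{e_S}(\lambda_S)=q$. For nonempty $S$, however, $\OR_n(e_S)=1\ne 0=\OR_n(0^n)$. Correctness therefore says $Z^{e_S}$ and $Z^{0^n}$ have \emph{disjoint} supports; privacy only equates the distributions of the nonzero inputs among themselves.

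So no such $\lambda_S$ exists, and the family $\{p^S\}$ is never defined. Indeed, $Z^{e_L}(\lambda_L)=q$ would already violate correctness before the anti-embedding step begins. The ``delicate point'' you flag is exactly where the argument breaks, and its motivation is unfounded. The string $p^\emptyset$ never enters the first property, since both $J$ and $L$ are nonempty there. The second property only asks that $p^S$ agree with $p^\emptyset$ off $S$. Nothing requires $p^\emptyset$ to lie in the support of $\Pno$.

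\textbf{Fix.} Use the paper's choice and take $q$ in the support of $\Pyes$. Every nonzero $e_S$ has distribution $\Pyes$, so $\lambda_S$ with $Z^{e_S}(\lambda_S)=q$ exists. Set $p^S=Z^{0^n}(\lambda_S)$, which lies in the support of $\Pno$, and simply define $p^\emptyset=q$. With this change, your coordinate-by-coordinate check that $Z^{e_{L\setminus J}}(\lambda_L)=p^J$ and the resulting support clash are verbatim the paper's proof.

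Merely dualizing your construction would not rescue the choice of $q$ from the support of $\Pno$. Suppose you pick $\lambda_S$ with $Z^{0^n}(\lambda_S)=q$ and set $p^S=Z^{e_S}(\lambda_S)$. The analogous substitution then gives $Z^{e_J}(\lambda_L)=p^J$, with both sides in the support of $\Pyes$, so no contradiction arises. The asymmetry of $\OR$, with many inputs on the yes side and one on the no side, is what forces the reference string to come from $\Pyes$.
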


\begin{lemma}\label{lem:nonexist_anti}
For every integer $K\ge1$, there exists an integer $n_K\ge1$ such that if the family of strings in \Cref{lem:anti-embedding_family} exists, then $n\le n_K$.
\end{lemma}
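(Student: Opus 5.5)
We argue by contradiction and compactness: if $n$ is unbounded, we build one infinite anti-embedding family, use Ramsey's theorem to make it depend only on the size of the index set, and then contradict Higman's lemma. Fix $K$ and suppose that for infinitely many $n$ (hence, by the restriction below, for every $n$) there is a family $(p^S)_{S\subseteq[n]}$ as in \Cref{lem:anti-embedding_family}.

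\textbf{Step 1: restriction.} A family on $[n]$ restricts to a family on $[m]$ for every $m\le n$. We keep only the indices $S\subseteq[m]$ and truncate each string to its first $m$ coordinates. Both properties survive:
\begin{itemize}
\item the anti-embedding condition $p^J_J\ne p^L_J$ for $\emptyset\ne J\subsetneq L\subseteq[m]$ only looks at coordinates in $J\subseteq[m]$;
\item the condition $p^S_i=p^\emptyset_i$ is simply inherited for $i\in[m]\setminus S$.
\end{itemize}

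\textbf{Step 2: compactness.} For each $m$ there are only finitely many families on $[m]$, and by Step 1 restriction maps families on $[m+1]$ to families on $[m]$. These families therefore form a finitely branching infinite tree. König's lemma (\Cref{lem:konig-infinity}) then yields a compatible infinite branch. This is a family $(p^S)$ indexed by finite subsets $S\subseteq\Nbb$, with $p^S\in[K]^{\Nbb}$, satisfying $p^J_J\ne p^L_J$ for all finite $\emptyset\ne J\subsetneq L$. Only the anti-embedding property is needed from here on.

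\textbf{Step 3: Ramsey diagonalization.} For $k\ge1$, color each $k$-subset $S=\{s_1<\dots<s_k\}$ of $\Nbb$ by the word $(p^S_{s_1},\dots,p^S_{s_k})\in[K]^k$; this uses finitely many colors. By the infinite Ramsey theorem (\Cref{lem:infinite-ramsey}) we obtain nested infinite sets $\Nbb\supseteq N_1\supseteq N_2\supseteq\cdots$ such that all $k$-subsets of $N_k$ receive a common word $w_k\in[K]^k$.

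We then diagonalize. Pick $m_1<m_2<\cdots$ with $m_j\in N_j$, and set $M_k=\{m_j:j\ge k\}\subseteq N_k$. Every $k$-subset of $M_k$ then has word $w_k$. Since $M_j\subseteq M_i$ for $i<j$, the same set $M_j$ witnesses both $w_i$ and $w_j$.

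\textbf{Step 4: Higman.} Apply Higman's lemma (\Cref{lem:higman}) to the infinite sequence $w_1,w_2,\dots$ over the finite alphabet $[K]$. It gives $i<j$ and positions $t_1<\dots<t_i$ in $[j]$ such that $w_i$ equals the subsequence of $w_j$ at those positions.

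Now take any $j$-subset $L=\{l_1<\dots<l_j\}\subseteq M_j$ and let $J=\{l_{t_1},\dots,l_{t_i}\}$. Then:
\begin{itemize}
\item $J$ is a nonempty proper subset of $L$, since $1\le i<j$;
\item $J$ is an $i$-subset of $M_j\subseteq M_i$, so $p^J_J=w_i$;
\item the coordinates of $p^L$ at $J$ read off exactly the subsequence of $w_j$ at positions $t_1,\dots,t_i$, which is $w_i$.
\end{itemize}
Thus $p^J_J=p^L_J$, contradicting anti-embedding, and so $n_K$ exists.

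\textbf{Main obstacle.} The delicate step is Step 2. One must check that restriction really maps valid families to valid families, so that the tree is well defined, and that the limiting object still satisfies the anti-embedding condition for every finite pair $J\subsetneq L$. This holds because each such condition involves only finitely many coordinates. The diagonalization in Step 3 also needs care: the words $w_i$ and $w_j$ must be realized inside a common set, which the nested choice $M_j\subseteq M_i$ ensures.
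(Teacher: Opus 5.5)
Your proposal is correct and follows the paper's proof essentially step for step. You restrict to smaller dimensions, apply K\H{o}nig's lemma to obtain a compatible infinite family, use iterated infinite Ramsey to make the coloring $S\mapsto p^S_S$ depend only on $|S|$, and then apply Higman's lemma to produce $J\subsetneq L$ with $p^J_J=p^L_J$. The one difference is your extra diagonalization to the sets $M_k$, which is harmless but unnecessary: the nested sets $N_j\subseteq N_i$ already witness both $w_i$ and $w_j$, just as the paper uses $H_d\subseteq H_{d'}$.
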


We prove these lemmas in \Cref{sec:anti-emb,sec:anti-obstruction}, respectively.

\begin{proof}[Proof of \Cref{thm:general-or}]
Simply combine \Cref{lem:anti-embedding_family} and \Cref{lem:nonexist_anti}.
\end{proof}

\subsubsection{Local anti-embedding family}\label{sec:anti-emb}

To prove \Cref{lem:anti-embedding_family}, we fix an arbitrary perfect DRE of $\OR_n$ with alphabet $[K]$.
For simplicity, we use $\Lambda$ to denote the probability space of the DRE and assume that each $\lambda\in\Lambda$ has nonzero probability mass, so each $\lambda$ determines $Z(\lambda)=(Z_i^b(\lambda))_{i\in[n],b\in\bin}$ in \Cref{def:intro-dre}.
As such, the DRE of $\OR_n$ given input $x$ is computed by first drawing $\lambda\in\Lambda$ and then outputting $Z^x=Z^x(\lambda)$ deterministically.

We start with the following simple observation.

\begin{lemma}\label{lem:pointwise-separation}
For every $i\in[n]$ and $\lambda\in\Lambda$, we have $Z_i^0(\lambda)\ne Z_i^1(\lambda)$.
Consequently, $K\ge2$.
\end{lemma}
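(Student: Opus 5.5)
The plan is to argue by contradiction using correctness for $\OR_n$ at two inputs that differ only in coordinate $i$. Fix $i\in[n]$ and $\lambda\in\Lambda$, and suppose $Z_i^0(\lambda)=Z_i^1(\lambda)$. Consider the inputs $0^n$ and $e_i$, the indicator vector of $\{i\}$. Under the same $\lambda$, the encodings $Z^{0^n}(\lambda)$ and $Z^{e_i}(\lambda)$ agree on every coordinate $j\ne i$, since both use $Z_j^0(\lambda)$. On coordinate $i$ they are $Z_i^0(\lambda)$ and $Z_i^1(\lambda)$, which we assumed are equal. Hence $Z^{0^n}(\lambda)=Z^{e_i}(\lambda)$.

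Since $\lambda$ has nonzero probability mass, this common string lies in the support of both $Z^{0^n}$ and $Z^{e_i}$. But $\OR_n(0^n)=0\ne1=\OR_n(e_i)$, so correctness in \Cref{def:intro-dre} requires these supports to be disjoint, a contradiction. Therefore $Z_i^0(\lambda)\ne Z_i^1(\lambda)$.

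For the consequence, we need $n\ge1$ so that some coordinate $i$ exists, and $\Lambda$ nonempty. Then $Z_i^0(\lambda)$ and $Z_i^1(\lambda)$ are two distinct elements of $[K]$, so $K\ge2$.

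There is no real obstacle here. The only point requiring care is that we use the standing assumption that every $\lambda\in\Lambda$ has positive mass, so that equality for a single $\lambda$ already produces a common support point.
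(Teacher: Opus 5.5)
Your proposal is correct and follows the paper's proof: assuming $Z_i^0(\lambda)=Z_i^1(\lambda)$ forces $Z^{0^n}(\lambda)=Z^{e_i}(\lambda)$, which contradicts correctness since $\OR_n(0^n)\ne\OR_n(e_i)$. Your explicit use of the positive-mass assumption on $\lambda$, and of $n\ge1$ for the consequence $K\ge2$, just spells out details the paper leaves implicit.
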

\begin{proof}
If $Z_i^0(\lambda)=Z_i^1(\lambda)$, then $Z^{0^n}(\lambda)=Z^{e_i}(\lambda)$ where $e_i$ is the indicator vector of $i$.
Since $\OR_n(0^n)=0$ and $\OR_n(e_i)=1$, this violates the correctness property of \Cref{def:intro-dre}.
\end{proof}

We will upgrade the above pointwise separation to a stronger local anti-embedding property. This crucially requires us to move between different $\lambda$'s in $\Lambda$.

By \Cref{def:intro-dre}, all $Z^x$ with $x\in\bin^n\setminus\{0^n\}$ have the same distribution $\Pyes$, whose support is disjoint from that of $\Pno$, the distribution of $Z^{0^n}$.
Fix an arbitrary transcript
$$
(q_1,\ldots,q_n)\quad\text{in the support of $\Pyes$.}
$$
For each $\emptyset\ne S\subseteq[n]$, we use $e_S\in\bin^n$ to denote the indicator vector of $S$ and choose an arbitrary $\lambda_S\in\Lambda$ such that $Z^{e_S}(\lambda_S)=q$.
Define
\begin{equation}\label{eq:above_local-anti}
  p^S=Z^{0^n}(\lambda_S),
\end{equation}
which is in the support of $\Pno$.

\begin{lemma}[Local anti-embedding]\label{lem:local-anti}
For every $\emptyset\ne S\subseteq[n]$ and $i\in[n]$, we have
\begin{equation}\label{eq:lem:local-anti_1}
p^S_i=q_i\quad\text{if and only if $i\notin S$.}
\end{equation}
Moreover, if $\emptyset\ne J\subsetneq L\subseteq[n]$, then $p_J^J\ne p_J^L$.
\end{lemma}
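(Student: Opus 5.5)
Both parts follow by working with the single sample $\lambda_S$ and then comparing two different samples, using only \Cref{lem:pointwise-separation} and perfect correctness.

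\textbf{First part.} Recall that $Z^{e_S}(\lambda_S)=q$. For $i\notin S$ the input $e_S$ has $i$-th bit $0$, so
\[
q_i=Z_i^0(\lambda_S)=p^S_i .
\]
For $i\in S$ the $i$-th bit is $1$, so $q_i=Z_i^1(\lambda_S)$, while $p^S_i=Z_i^0(\lambda_S)$. \Cref{lem:pointwise-separation} gives $Z_i^0(\lambda_S)\ne Z_i^1(\lambda_S)$, hence $p^S_i\ne q_i$. This proves \eqref{eq:lem:local-anti_1}.

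\textbf{Second part.} Suppose for contradiction that $p^J_J=p^L_J$ for some $\emptyset\ne J\subsetneq L$. The plan is to build a string that is simultaneously an encoding of a yes-input and of $0^n$. Consider $Z^{e_{L\setminus J}}(\lambda_L)$. Its coordinates are:
\begin{itemize}
\item on $L\setminus J$: $Z_i^1(\lambda_L)=q_i$, since $e_L$ and $e_{L\setminus J}$ agree there;
\item outside $L$: $Z_i^0(\lambda_L)=q_i$;
\item on $J$: $Z_i^0(\lambda_L)=p^L_i=p^J_i$.
\end{itemize}
Compare this with $p^J=Z^{0^n}(\lambda_J)$. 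By the first part, $p^J_i=q_i$ for all $i\notin J$, and on $J$ it equals $p^J_J$. So the two strings coincide entrywise.

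Now $L\setminus J\ne\emptyset$, so $e_{L\setminus J}$ is a nonzero input and the common string lies in the support of $\Pyes$. It is also $Z^{0^n}(\lambda_J)$, so it lies in the support of $\Pno$. This contradicts correctness.

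The only delicate point is this bookkeeping: choosing $e_{L\setminus J}$ under $\lambda_L$, rather than some other input or sample, is what makes the string agree with $p^J$ everywhere. Using the first part to identify $p^J$ off $J$ with $q$ is the other essential step.
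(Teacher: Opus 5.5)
Your proposal is correct and follows the paper's proof essentially verbatim: the first part uses $Z^{e_S}(\lambda_S)=q$ together with \Cref{lem:pointwise-separation}, and the second part shows $Z^{e_{L\setminus J}}(\lambda_L)=p^J$ via the same three-case split ($i\notin L$, $i\in L\setminus J$, $i\in J$). This string then lies in the supports of both $\Pyes$ and $\Pno$, contradicting correctness.
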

\begin{proof}
If $i\notin S$, then the $i$th coordinate of $e_S$ is $0$, which, by \Cref{eq:above_local-anti}, means
$$
p^S_i=\pbra{Z^{0^n}(\lambda_S)}_i=Z_i^0(\lambda_S)=\pbra{Z^{e_S}(\lambda_S)}_i=q_i.
$$
If $i\in S$, then similarly we have $Z_i^1(\lambda_S)=q_i$ and, by \Cref{lem:pointwise-separation}, $p_i^S=Z_i^0(\lambda_S)\ne q_i$.
This proves \Cref{eq:lem:local-anti_1}.

For the ``moreover'' part, assume for contradiction that $p_J^J=p_J^L$.
Define $B=L\setminus J$, which is a nonempty set.
We will show that $Z^{e_B}(\lambda_L)=p^J$; this violates the correctness of \Cref{def:intro-dre} since $\OR_n(e_B)=1$ yet $p^J$ is in the support of $\Pno$.
To see this, we divide $i\in[n]$ into cases.
\begin{itemize}
\item If $i\notin L$, then $\pbra{Z^{e_B}(\lambda_L)}_i=Z^0_i(\lambda_L)=\pbra{Z^{e_L}(\lambda_L)}_i=q_i$. By \Cref{eq:lem:local-anti_1}, we have $p^J_i=q_i$ and hence $\pbra{Z^{e_B}(\lambda_L)}_i=p^J_i$.
\item If $i\in B$, then $\pbra{Z^{e_B}(\lambda_L)}_i=Z^1_i(\lambda_L)=\pbra{Z^{e_L}(\lambda_L)}_i=q_i$. By \Cref{eq:lem:local-anti_1}, we have $p^J_i=q_i$ and hence $\pbra{Z^{e_B}(\lambda_L)}_i=p^J_i$.
\item If $i\in J$, then $\pbra{Z^{e_B}(\lambda_L)}_i=Z^0_i(\lambda_L)=p^L_i$. Since we assumed $p^L_J=p^J_J$, we also have $\pbra{Z^{e_B}(\lambda_L)}_i=p^L_i=p^J_i$.
\end{itemize}
This completes the proof.
\end{proof}

To summarize, any perfect DRE of $\OR_n$ can be turned into a family of strings with anti-embedding structure. This readily proves \Cref{lem:anti-embedding_family}.

\begin{proof}[Proof of \Cref{lem:anti-embedding_family}]
Construct $p^S$ with \Cref{eq:above_local-anti} and define $p^\emptyset=q$.
Then the desired properties follow from \Cref{lem:local-anti}.
\end{proof}

\subsubsection{Finiteness of local anti-embedding family}\label{sec:anti-obstruction}

The proof of \Cref{lem:nonexist_anti} uses three standard infinitary results.

\begin{lemma}[K\H{o}nig's infinity lemma~\cite{Konig27}]\label{lem:konig-infinity}
Every infinite rooted tree in which each vertex has finitely many children contains an infinite path starting at the root.
\end{lemma}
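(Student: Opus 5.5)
The statement to prove is K\H{o}nig's infinity lemma, for which I would use the standard compactness argument: build the infinite path one vertex at a time, maintaining the invariant that the current vertex has infinitely many descendants. Here a descendant of $v$ means any vertex in the subtree rooted at $v$, including $v$ itself.

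First, the root has infinitely many descendants, since the tree is infinite and every vertex is a descendant of the root. Set $v_0$ to be the root.

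For the inductive step, suppose $v_k$ has infinitely many descendants. Every descendant of $v_k$ other than $v_k$ is a descendant of exactly one child of $v_k$, so the descendant set of $v_k$ is $\{v_k\}$ together with the union of the descendant sets of its children. By hypothesis $v_k$ has only finitely many children. A finite union of finite sets is finite, so at least one child $c$ must have infinitely many descendants. In particular $v_k$ has at least one child, so the construction never gets stuck. Set $v_{k+1}=c$; this uses the axiom of dependent choice, or a fixed well-ordering of the children if one prefers to avoid choice.

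The sequence $v_0,v_1,v_2,\ldots$ is then an infinite path starting at the root, since each $v_{k+1}$ is a child of $v_k$.

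There is no real obstacle. The only point requiring care is that the invariant is \emph{infinitely many descendants}, not merely \emph{has a child}. The weaker invariant fails because a child could be a leaf, whereas the stronger one guarantees a child exists at every step.
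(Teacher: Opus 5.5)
Your proof is correct. It is the standard argument: keep a current vertex with infinitely many descendants, and use finite branching to pass that property to some child. The paper does not prove this lemma; it only cites it as a classical result, so there is no alternative route to compare.

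One small correction concerns your side remark about avoiding choice. Fixing a well-ordering of the children of every vertex at once is itself a use of choice in general. Over ZF, K\H{o}nig's lemma for arbitrary finitely branching trees is equivalent to countable choice for finite sets. So the argument avoids choice only when the vertices come with a canonical order. That is the case in the paper's application, where the vertices are finite families of strings over $[K]$.
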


\begin{lemma}[Infinite hypergraph Ramsey theorem~\cite{Ramsey30}]\label{lem:infinite-ramsey}
Let\footnote{We use $\Nbb_+=\cbra{1,2,\ldots}$ to denote the set of positive integers.} $H\subseteq\Nbb_+$ be an arbitrary infinite set.
For all integers $s,r\ge1$, every coloring of the size-$s$ subsets of $H$ with $r$ colors has an infinite set $H'\subseteq H$ all of whose size-$s$ subsets have the same color.
\end{lemma}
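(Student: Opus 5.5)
The approach is induction on $s$, with $r$ fixed but arbitrary, following Ramsey's original argument. For $s=1$ the statement is the infinite pigeonhole principle: coloring the elements of an infinite $H$ with $r$ colors leaves some color class infinite, and that class is $H'$. For the inductive step I assume the statement for $s-1$ (for every infinite subset of $\Nbb_+$ and every finite number of colors) and prove it for $s\ge2$.

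\textbf{Step 1 (nested sequence).} Given a coloring $\chi$ of the size-$s$ subsets of $H$, I construct inductively elements $a_1<a_2<\cdots$, infinite sets $H=H_0\supseteq H_1\supseteq H_2\supseteq\cdots$, and colors $c_1,c_2,\ldots\in[r]$. At stage $k\ge1$:
\begin{itemize}
\item set $a_k=\min H_{k-1}$;
\item color each size-$(s-1)$ subset $T\subseteq H_{k-1}\setminus\{a_k\}$ by $\chi(\{a_k\}\cup T)$;
\item apply the induction hypothesis to the infinite set $H_{k-1}\setminus\{a_k\}$, obtaining an infinite $H_k\subseteq H_{k-1}\setminus\{a_k\}$ on which this induced coloring is constant, say equal to $c_k$.
\end{itemize}
Since $a_k\notin H_k$ and $H_k\subseteq H_{k-1}$, every later element $a_{k'}$ with $k'>k$ lies in $H_k$. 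The $a_k$ are strictly increasing.

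\textbf{Step 2 (pigeonhole on the colors).} The sequence $(c_k)$ takes only $r$ values, so by the $s=1$ case some color $c$ satisfies $c_k=c$ for infinitely many $k$. Let
$$
H'=\{a_k : c_k=c\},
$$
which is infinite because the $a_k$ are distinct.

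\textbf{Step 3 (verification).} Take any size-$s$ subset of $H'$ and let $a_k$ be its smallest element. The remaining $s-1$ elements are $a_{k'}$ with $k'>k$, so they lie in $H_k$. By the choice of $H_k$, the set therefore has color $c_k=c$. Hence every size-$s$ subset of $H'$ has color $c$, as required.

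The only step needing care is the bookkeeping in Step 1. The key point is that each later $a_{k'}$ lies in $H_k$, which is exactly what lets the minimum element of an $s$-set determine its color. I expect no real obstacle, since this is the standard diagonal argument. The argument uses dependent choices of infinite subsets, which is harmless here because all sets are subsets of $\Nbb_+$ and each choice can be made canonical.
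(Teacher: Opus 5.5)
Your proof is correct. It is the standard diagonal proof of the infinite Ramsey theorem: induction on $s$, a nested sequence $H_0\supseteq H_1\supseteq\cdots$ with $a_k=\min H_{k-1}$, so that the least element of any $s$-subset of $\{a_k\}$ determines its color, and then pigeonholing on the $c_k$. The paper gives no proof of this lemma; it cites Ramsey's theorem and uses it as a black box, so there is no route to compare against. One small point on your closing remark about avoiding choice: the choices become canonical only if you carry an explicit selection rule through the induction (for example, take the least color with an infinite class at the base case); the fact that all sets lie in $\Nbb_+$ does not by itself justify it. This does not affect the argument as used in the paper.
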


\begin{lemma}[Higman's lemma~\cite{Higman52}]\label{lem:higman}
Let $\Sigma$ be a finite alphabet. For every infinite sequence
$w_1,w_2,\ldots$ from $\Sigma^*$, there are indices $a<b$ such that $w_a$ is a subsequence\footnote{That is, $w_a$ can be obtained from $w_b$ by deleting symbols.} of $w_b$.
\end{lemma}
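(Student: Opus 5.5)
We would prove \Cref{lem:higman} by the classical minimal bad sequence argument of Nash-Williams. Write $u\preceq w$ when $u$ is a subsequence of $w$, and call an infinite sequence $w_1,w_2,\ldots$ over $\Sigma^*$ \emph{bad} if there are no indices $a<b$ with $w_a\preceq w_b$. We assume for contradiction that some bad sequence exists and derive a contradiction from a carefully chosen ``smallest'' bad sequence.

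\textbf{Step 1: choosing a minimal bad sequence.} Build $w_1,w_2,\ldots$ inductively. Let $w_1$ be a string of minimum length among all first terms of bad sequences. Having fixed $w_1,\ldots,w_k$ such that some bad sequence starts with them, let $w_{k+1}$ be a string of minimum length among all strings $w$ such that some bad sequence starts with $w_1,\ldots,w_k,w$. Lengths are natural numbers, so each minimum exists, and the induction is valid by dependent choice.

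The resulting sequence is bad: any violation $w_a\preceq w_b$ with $a<b$ already occurs in a finite prefix, and every finite prefix extends to a bad sequence. Also, every $w_i$ is nonempty, since the empty string is a subsequence of everything. Hence each $w_i$ can be written as $w_i=\sigma_i v_i$ with $\sigma_i\in\Sigma$ and $|v_i|<|w_i|$.

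\textbf{Step 2: shortening and deriving a contradiction.} Since $\Sigma$ is finite, there are indices $i_1<i_2<\cdots$ and a single letter $\sigma$ with $\sigma_{i_k}=\sigma$ for all $k$. Consider the sequence
\[
w_1,\ldots,w_{i_1-1},v_{i_1},v_{i_2},\ldots.
\]
We claim this sequence is bad, and check the three kinds of pairs.
\begin{itemize}
\item Two of the first $i_1-1$ terms: these cannot form a violation because $(w_i)$ is bad.
\item A term $w_j$ with $j<i_1$ before some $v_{i_k}$: if $w_j\preceq v_{i_k}$, then $w_j\preceq w_{i_k}$, a violation in $(w_i)$ since $j<i_k$.
\item Two shortened terms with $k<l$: if $v_{i_k}\preceq v_{i_l}$, then $\sigma v_{i_k}\preceq\sigma v_{i_l}$, i.e.\ $w_{i_k}\preceq w_{i_l}$, again a violation.
\end{itemize}

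So this is a bad sequence beginning with $w_1,\ldots,w_{i_1-1}$ whose next term $v_{i_1}$ is strictly shorter than $w_{i_1}$. This contradicts the minimal-length choice of $w_{i_1}$ in Step 1, and proves the lemma.

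\textbf{Main obstacle.} The only delicate point is setting up the minimality correctly in Step 1: it must be lexicographic, minimizing $|w_{k+1}|$ given the already fixed prefix. Otherwise the comparison in Step 2, which keeps the prefix $w_1,\ldots,w_{i_1-1}$ and shortens only the $i_1$-th term, is not a contradiction. The remaining verifications are routine.
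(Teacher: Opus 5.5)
The paper does not prove this lemma: it is imported as a black box from \cite{Higman52}, alongside K\H{o}nig's lemma and the infinite Ramsey theorem, and only its qualitative content is used in the proof of \Cref{lem:nonexist_anti}.

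Your argument is the standard Nash-Williams minimal bad sequence proof, and it is correct as written. The points that need checking all hold:
\begin{itemize}
\item The constructed sequence is bad because every finite prefix extends to a bad sequence.
\item Each $w_i$ is nonempty, and you justify this.
\item The pigeonhole on first letters is exactly what makes $v_{i_k}\preceq v_{i_l}$ imply $w_{i_k}\preceq w_{i_l}$.
\item The cross pairs $(w_j,v_{i_k})$ are handled by $v_{i_k}\preceq w_{i_k}$ together with $j<i_1\le i_k$.
\item The edge case $i_1=1$ is covered, since the shortened sequence then has an empty prefix and contradicts the choice of $w_1$.
\end{itemize}
You can even avoid dependent choice by taking the shortlex-least admissible string at each step, since $\Sigma^*$ is countable.

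Compared with the paper's bare citation, your version makes the section self-contained at the cost of one paragraph. It does not buy any quantitative information. Like the K\H{o}nig and Ramsey steps, the minimal bad sequence argument is non-constructive and bounds nothing about the length of bad sequences. An explicit value of $n_K$ in \Cref{lem:nonexist_anti} would therefore still need the controlled bad-sequence analysis the paper points to, whose bounds are not primitive recursive.
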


We now prove \Cref{lem:nonexist_anti}. Quantitative versions of the preceding results yield an explicit bound \cite{ErdosRado52,SchmitzSchnoebelen11}, but it is not primitive recursive.

\begin{proof}[Proof of \Cref{lem:nonexist_anti}]
Observe that any valid family of strings $\cbra{p^S}_{S\subseteq[n]}$ in \Cref{lem:anti-embedding_family} of dimension $n$ naturally gives valid families of smaller dimensions $n'<n$, simply by discarding $p^S$ if $S\not\subseteq[n']$ and truncating $p^S$ to the first $n'$ coordinates if $S\subseteq[n']$.
For convenience, we write $\cbra{r^S}_{S\subseteq[n']}\prec\cbra{p^S}_{S\subseteq[n]}$ if the former family is derived from the latter by the method described.

Assume for contradiction that \Cref{lem:nonexist_anti} is false.
By the discussion above, valid families exist in every dimension $n\ge1$.
Given this, we construct a finitely branching rooted forest as follows: for each valid family of dimension $n$, we create a tree vertex at depth $n$; then we connect two vertices if their dimensions differ by exactly $1$ and they satisfy the $\prec$ relation above.
That is, consider two vertices labeled by $\Rsf=\cbra{r^S}_{S\subseteq[n']}$ and $\Psf=\cbra{p^S}_{S\subseteq[n]}$ respectively; then $\Rsf$ is the parent vertex of $\Psf$ if and only if $n'=n-1$ and $\Rsf\prec\Psf$.
The forest has finitely many roots, and, because $[K]$ is finite, each vertex has finitely many children.
Hence we can apply \Cref{lem:konig-infinity} and find an infinite path $\Psf_1\prec\Psf_2\prec\Psf_3\prec\cdots$ starting at some depth-$1$ vertex $\Psf_1$, where each $\Psf_d$ is a valid family of dimension $d$.

We next use this path to define a coloring $\chi$ on all nonempty finite subsets of $\Nbb_+$.
For each such $S$, choose any integer $d$ sufficiently large such that $d\ge\max_{i\in S}i$ and let $p^{S,d}\in\Psf_d$ be the string from $\Psf_d$ indexed by $S$.
Define
\[
  \chi(S)=p^{S,d}_S\in[K]^{|S|},
\]
which lists the symbols on coordinates $S$ in increasing coordinate order.
By the $\prec$ relation defined above, $\chi(S)$ does not depend on the choice of $d$.
Observe that for fixed $s$, the coloring $\chi(S)$ on $|S|=s$ has finitely many colors.
Hence we can sequentially apply \Cref{lem:infinite-ramsey} and homogenize as follows: we start with $s=1$. By \Cref{lem:infinite-ramsey}, we find an infinite $H_1\subseteq\Nbb_+$ and a color $c_1\in[K]$ such that $\chi(S)=c_1$ for every size-$1$ set $S\subseteq H_1$. Then sequentially for $s\ge2$, we apply \Cref{lem:infinite-ramsey} to find an infinite $H_s\subseteq H_{s-1}$ and a color $c_s\in[K]^s$ such that $\chi(S)=c_s$ for every size-$s$ set $S\subseteq H_s$.

At this point, we can obtain an infinite sequence $c_1,c_2,\ldots$ from $[K]^*$.
By \Cref{lem:higman}, there exist indices $d'<d$ such that $c_{d'}\in[K]^{d'}$ is a subsequence of $c_d\in[K]^d$.
Choose indices $1\le j_1<\cdots<j_{d'}\le d$ witnessing this subsequence embedding. Take any $S=\{i_1<\cdots<i_d\}\subset H_d$, set $S'=\{i_{j_1},\ldots,i_{j_{d'}}\}$, and choose $D\ge\max S$. Since $H_d\subseteq H_{d'}$, we have
$$
p^{S',D}_{S'}=c_{d'}\quad\text{is a subsequence of}\quad p_S^{S,D}=c_d.
$$
By the choice of the positions $j_1,\ldots,j_{d'}$, this gives $p^{S',D}_{S'}=p^{S,D}_{S'}$.
Since $p^{S',D}$ and $p^{S,D}$ are strings from the valid family $\Psf_D$, this contradicts the first property in \Cref{lem:anti-embedding_family}, which completes the proof.
\end{proof}

\section{Statistical and probabilistic extensions}\label{sec:statistical-extensions}

This section proves the statistical and probabilistic extensions stated in \Cref{sec:intro-extensions}.

We begin with statistical DREs. For distributions $\mu,\nu$, let $\TVdist(\mu,\nu)$ denote their total variation distance.

\begin{definition}[Statistical decomposable randomized encoding]\label{def:statistical-dre}
Let $f\colon\bin^n\to\bin$.
A statistical DRE of $f$ consists of possibly correlated random variables $(Z_i^b)_{i\in[n],b\in\bin}$, each taking values in a finite alphabet $[K]$, together with two distributions $\Pyes,\Pno$ over $[K]^n$. On input $x$, the encoding is the random vector $Z^x=(Z_i^{x_i})_{i\in[n]}\in[K]^n$.

The correctness error $\delta$ is defined as
$$
\delta=1-\TVdist(\Pyes,\Pno).
$$
The privacy error $\eta$ is defined as
$$
\eta=\max\cbra{\max_{x\in f^{-1}(1)}\TVdist(\mu_x,\Pyes),\max_{x\in f^{-1}(0)}\TVdist(\mu_x,\Pno)},
$$
where $\mu_x$ is the distribution of $Z^x$.
The size of the DRE is $n\log K$ as in \Cref{def:intro-dre}.
\end{definition}

A statistical DRE is perfect exactly when $\delta=\eta=0$.

\begin{remark}
The definitions in \cite{AIK06,AIKP15} use a decoder and a simulator. Ignoring computational efficiency, let $\delta_{\mathrm{dec}}$ be the minimum worst-case decoding error and $\eta_{\mathrm{sim}}$ the minimum simulator-based privacy error. For nonconstant $f$ and any fixed reference distributions in \Cref{def:statistical-dre},
\[
\max\{0,\delta/2-\eta\}\le\delta_{\mathrm{dec}}\le\delta+\eta.
\]
For the upper bound, the likelihood-ratio test on $\Pyes,\Pno$ has the sum of its two reference errors equal to $\delta$; replacing a reference distribution by $\mu_x$ increases the corresponding error by at most $\eta$. Conversely, a decoder with worst-case error $e$ on the actual encodings has error at most $e+\eta$ on each reference distribution, so $\delta\le2(e+\eta)$. For constant $f$, $\delta_{\mathrm{dec}}=0$.

The reference distributions themselves define a simulator with error $\eta$, so $\eta_{\mathrm{sim}}\le\eta$; minimizing $\eta$ over the references gives exactly $\eta_{\mathrm{sim}}$. This optimization can change $\delta$. We use \Cref{def:statistical-dre} because the reference distributions $\Pyes,\Pno$ simplify the proofs.
\end{remark}

\subsection{Average-case lower bounds for general Boolean functions}\label{sec:approx-random}

We now prove \Cref{thm:intro-approx-random}, a robust, high-probability version of \Cref{thm:intro-random}.

\thmintroapproxrandom*

\begin{proof}
The proof largely follows the structure of the proof of \Cref{thm:intro-random}. First, assume $n\ge1/c_{\bar\delta,\bar\eta}$, where we set $c_{\bar\delta,\bar\eta}\in(0,1/2]$ later. If $n<1/c_{\bar\delta,\bar\eta}$, then the target lower bound is less than $n$, so any DRE violating the target bound must have a singleton alphabet. Such a DRE can encode only a constant function, and the two constant functions constitute a $2\cdot2^{-2^n}\le2^{-2^n/2}\le2^{-c_{\bar\delta,\bar\eta}\cdot2^n}\le2^{-c_{\bar\delta,\bar\eta}\cdot2^n/n^2}$ fraction of all Boolean functions, as claimed.

Define $\theta=\bar\delta+2\bar\eta<1/2$ and its binary entropy $\Hcal(\theta)=\theta\log\pbra{\frac1\theta}+(1-\theta)\log\pbra{\frac1{1-\theta}}$, with $\Hcal(0)=0$ by continuity.
Choose $0<c<1/2$ such that $\Hcal(\theta)+2c<1$.
Fix sets $W,R\subseteq\bin^n$ as in the proof of \Cref{thm:intro-random} such that every point of $R$ has a unique representation $x\oplus J$ with $x\in W$
and $|J|=2$.
In addition,
\begin{equation}\label{eq:thm:intro-approx-random_1}
  |W|=\Omega\pbra{\frac{2^n}{n^4}}
  \quad\text{and}\quad
  |R|=|W|\binom n2.
\end{equation}

Fix a function $f$ having a statistical DRE with the stated error bounds and size at most $c\cdot n^2$.
We again use $\Lambda$ for its probability space; the encoding $Z(\lambda)=(Z_i^b(\lambda))_{i\in[n],b\in\bin}$ is deterministic once $\lambda\in\Lambda$ is fixed.
By padding, we assume its alphabet is $[K]$, where $K=\lceil2^{cn}\rceil$.
Let
\[
  \mathcal A=\{z:\Pyes(z)>\Pno(z)\}\subseteq[K]^n,
\]
which witnesses the total variation distance between $\Pyes$ and $\Pno$.

Sample $Y_1\sim\Pyes$ and $Y_0\sim\Pno$.
For every $x\in W$, the definition of $\eta$ allows us to couple $\lambda_x\sim\Lambda$ with $Y_{f(x)}$ such that $\Pr[Z^x(\lambda_x)\ne Y_{f(x)}]\le\bar\eta$.
All these couplings can be realized simultaneously: after sampling $Y_0,Y_1$, sample each $\lambda_x$ from its coupling's conditional distribution given $Y_{f(x)}$. Each $\lambda_x$ retains the original randomness distribution as its marginal.
Define
\[
  L_i(x)=Z_i^{1-x_i}(\lambda_x).
\]
For each $J\subseteq[n]$ with $|J|=2$ and $b\in\bin$, define $D_{J,b}\colon[K]^J\to\bin$ by letting
$$
D_{J,b}(u_J)=\indicator\sbra{((Y_b)_{[n]\setminus J},u_J)\text{ is in }\Acal}.
$$
In \Cref{thm:intro-random}, we always have $D_{J,f(x)}\pbra{(L_i(x))_{i\in J}}=f(x\oplus J)$.
Here, because of the statistical error, the prediction $D_{J,f(x)}\pbra{(L_i(x))_{i\in J}}$ for $f(x\oplus J)$ can be wrong only
\begin{enumerate}
\item\label{itm:thm:intro-approx-random_1}
if $Z^x(\lambda_x)\ne Y_{f(x)}$, which has probability at most $\bar\eta$;
\item\label{itm:thm:intro-approx-random_2}
if $Z^{x\oplus J}(\lambda_x)$ lies on the wrong side of $\mathcal A$. This has probability at most $\bar\delta+\bar\eta$, where $\bar\delta$ comes from the definition of $\Acal$ and $\bar\eta$ comes from the distance between $Z^{x\oplus J}$ and the corresponding $\Pyes,\Pno$.
\end{enumerate}
Thus for each $x$ and $J$, we have
$$
\Pr\sbra{D_{J,f(x)}((L_i(x))_{i\in J})\ne f(x\oplus J)}\le\bar\delta+2\cdot\bar\eta=\theta<1/2,
$$
which means the average fraction of decoding errors over $R$, using $W$ and $D_{J,b}$'s, is at most $\theta$.
Fix all random choices so that this bound holds. We then need only record a correction pattern affecting at most a $\theta$ fraction of $R$; the number of such patterns is at most
$$
\binom{|R|}{\le\theta|R|}\le2^{\Hcal(\theta)\cdot|R|}
$$
possibilities.
As in the proof of
\Cref{thm:intro-random}, the total number of possible descriptions is at most
\[
  \pbra{2^{K^2}}^{2\binom n2}
  \cdot\pbra{K^{|W|}}^n
  \cdot2^{|W|}
  \cdot2^{\Hcal(\theta)|R|}
  =2^{|R|\cdot\beta},
\]
where
\begin{align*}
\beta
&=\Hcal(\theta)+\frac{|W|}{|R|}+\frac{n|W|\log K}{|R|}+\frac{2K^2\binom n2}{|R|}\\
&=\Hcal(\theta)+\frac1{\binom n2}+\frac{n\log K}{\binom n2}+\frac{2K^2}{|W|}
\tag{by \Cref{eq:thm:intro-approx-random_1}}\\
&=\Hcal(\theta)+2c+o(1).
\tag{since $K=\lceil2^{cn}\rceil$, $c<1/2$, and $|W|=\Omega\pbra{\frac{2^n}{n^4}}$}
\end{align*}
Since $\Hcal(\theta)+2c<1$, for $n\ge n_\star$ sufficiently large, we have $\beta<1$ and the fraction of truth tables on $R$ admitting such a description is therefore at most
$2^{-(1-\beta)\cdot|R|}\le2^{-\gamma\cdot2^n/n^2}$, where $\gamma$ is a constant depending on $\theta$ and $c$.

Since both $\gamma$ and $n_\star$ depend only on $\theta,c$, which in turn depend only on $\bar\delta,\bar\eta$, it suffices to define $c_{\bar\delta,\bar\eta}=\min\cbra{1/2,\gamma,c,1/n_\star}$ and this completes the proof.
\end{proof}

\subsection{Symmetric classification with statistical error}\label{sec:approx-classification}

We next prove \Cref{thm:approx-symmetric}, the robust version of \Cref{thm:intro-symmetric}.

\thmintroapproxsymmetric*

The proof follows the same roadmap as the perfect case, with robust substitutes for its two structural lemmas.

\begin{lemma}[Statistical version of \Cref{lem:block-identification}]\label{lem:approx-block-identification}
Assume $f\colon\bin^n\to\bin$ has a statistical DRE with alphabet $[K]$, correctness error $\delta$, and privacy error $\eta$.
The following transformations from \Cref{lem:block-identification} remain valid:
\begin{itemize}
\item The \textsc{Rearranging}, \textsc{Reverse}, \textsc{Negation}, and \textsc{Projection} rules preserve both error bounds.
\item The \textsc{Restriction} rule preserves both error bounds at the cost of a larger alphabet: fixing any $s<n$ input bits yields a statistical DRE with alphabet $[K^{s+1}]$.
\end{itemize}
\end{lemma}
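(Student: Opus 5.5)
The four "free" rules go through exactly as in the proof of \Cref{lem:block-identification}, now also transforming the reference distributions. For \textsc{Rearranging}, we permute coordinates of $\Pyes,\Pno$ by $\pi$. For \textsc{Reverse}, we swap $\Pyes$ and $\Pno$. For \textsc{Negation}, we keep both references unchanged, since $\widetilde Z^x$ is literally $Z^{x'}$ for the negated input $x'$ with $f(x')=g(x)$. For \textsc{Projection}, we regroup coordinates of $\Pyes,\Pno$ into blocks $E_j$ and view each block string as a symbol in $[K^d]$. In each case the map from $Z$ to $\widetilde Z$ is a fixed bijective relabeling of $[K]^n$ applied identically to the encodings and to the references. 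Total variation distance is invariant under bijections, so $\delta$ is unchanged. The set of values $\{\TVdist(\mu_x,\mathcal P_{f(x)})\}$ is also unchanged, only reindexed over inputs, so $\eta$ is preserved. These cases are routine and I would write them in a few lines each.

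\textsc{Restriction} is where the perfect-case argument breaks. There we conditioned on $Z_n^1=q_\star$, but conditioning can blow up statistical errors when the conditioning event has small probability. So I will not condition. Instead, I will keep the fixed coordinates' encodings and fold them into a surviving coordinate.

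Concretely, suppose we fix the inputs on a set $T$ of size $s<n$ to values $c_T$, and pick some surviving index $i_0\notin T$. Define the new DRE on $[n]\setminus T$ by
\[
\widetilde Z_{i_0}^b=\pbra{Z_{i_0}^b,(Z_j^{c_j})_{j\in T}}\in[K]^{s+1},
\qquad
\widetilde Z_i^b=Z_i^b\ \ (i\ne i_0),
\]
embedding everything into $[K^{s+1}]$. Then for each restricted input $y$, the encoding $\widetilde Z^y$ is just $Z^{(y,c_T)}$ with coordinates regrouped. We take the new references to be $\Pyes,\Pno$ under the same regrouping bijection. Then $\delta$ is unchanged. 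The privacy error is a maximum over a subset of the original inputs, so it can only decrease. This is exactly the projection-style bijection argument, which is why the alphabet becomes $[K^{s+1}]$.

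The main obstacle is the one just described: noticing that naive conditioning must be avoided, and verifying that the regrouped encoding is still decomposable. The appended block $(Z_j^{c_j})_{j\in T}$ depends only on the shared randomness and not on the input bit of $i_0$, which is allowed in \Cref{def:statistical-dre}, so decomposability holds. For uniformity, I would also note that coordinates other than $i_0$ can be padded from $[K]$ into $[K^{s+1}]$ with dummy symbols.
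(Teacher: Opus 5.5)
Your proposal is correct and follows the paper's proof. The four free rules are input-independent injective relabelings of the encodings together with the reference distributions, which preserve total variation distance; for \textsc{Projection} the map into $[K^d]^m$ is injective rather than bijective and the inputs range over a subset, but both still suffice. For \textsc{Restriction} you do exactly what the paper does: instead of conditioning, you append the encodings of the $s$ fixed bits to a surviving coordinate, which gives an injective rewriting over alphabet $[K^{s+1}]$.
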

\begin{proof}
The proof is almost identical to that of \Cref{lem:block-identification}.  For \textsc{Restriction}, append the encoding of the $s$ fixed input bits to an unrestricted input bit, which now has alphabet $[K]^{s+1}\cong[K^{s+1}]$. The new encoding is an injective rewriting of the original one on the restricted subcube.
\end{proof}

\begin{lemma}[Statistical version of \Cref{lem:local-labels}]\label{lem:approx-local-labels}
Fix an arbitrary $f\colon\bin^n\to\bin$ and a statistical DRE of $f$ with alphabet $[K]$, correctness error $\delta$, and privacy error $\eta$.
For any $X\subseteq\bin^n$ and $\Jcal\subseteq2^{[n]}$, if
$$
|X|\cdot(1+|\Jcal|)\cdot(\delta+\eta)<1,
$$
then there exist functions
\begin{itemize}
\item $D_{J,b}\colon[K]^J\to\bin$, where $J\in\Jcal$ and $b\in\bin$,
\item $L_i\colon\bin^n\to[K]$, where $i\in[n]$,
\end{itemize}
such that $f(x\oplus J)=D_{J,f(x)}\pbra{(L_i(x))_{i\in J}}$ holds for all $J\in\Jcal$ and $x\in X$.
\end{lemma}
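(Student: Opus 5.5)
The plan is to mimic the proof of \Cref{lem:local-labels}, replacing the exact equalities with a union bound over a finite list of bad events, exactly as in the proof of \Cref{thm:intro-approx-random}. Let $\Acal=\{z:\Pyes(z)>\Pno(z)\}$, so that $\Pyes(\Acal^c)+\Pno(\Acal)=1-\TVdist(\Pyes,\Pno)=\delta$. In particular $\Pyes(\Acal^c)\le\delta$ and $\Pno(\Acal)\le\delta$. Sample $Y_1\sim\Pyes$ and $Y_0\sim\Pno$ (say independently). For each $x\in X$, use an optimal coupling to draw $\lambda_x$ jointly with $Y_{f(x)}$ so that $\lambda_x$ has the true marginal and
\[
\Pr[Z^x(\lambda_x)\ne Y_{f(x)}]=\TVdist(\mu_x,\mathcal P_{f(x)})\le\eta,
\]
where $\mathcal P_1=\Pyes$ and $\mathcal P_0=\Pno$. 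As in the earlier proof, all these couplings can be realized simultaneously by sampling each $\lambda_x$ conditionally on $Y_{f(x)}$. Then define
\[
L_i(x)=Z_i^{1-x_i}(\lambda_x)
\]
for $x\in X$ (and arbitrarily elsewhere), and
\[
D_{J,b}(u_J)=\indicator[((Y_b)_{[n]\setminus J},u_J)\in\Acal]
\]
for $J\in\Jcal$. The decoders depend only on $Y_0,Y_1$ and not on $x$, which is the key structural point.

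Next we verify the identity on good events. Suppose $Z^x(\lambda_x)=Y_{f(x)}$. Then $((Y_{f(x)})_{[n]\setminus J},(L_i(x))_{i\in J})=Z^{x\oplus J}(\lambda_x)$, exactly as in the chain of equalities in \Cref{lem:local-labels}. Hence $D_{J,f(x)}((L_i(x))_{i\in J})=\indicator[Z^{x\oplus J}(\lambda_x)\in\Acal]$. This equals $f(x\oplus J)$ unless $Z^{x\oplus J}(\lambda_x)$ lands on the wrong side of $\Acal$. Since $\lambda_x$ has the true marginal, $Z^{x\oplus J}(\lambda_x)\sim\mu_{x\oplus J}$, so this failure probability is at most $\mathcal P_{f(x\oplus J)}(\text{wrong side})+\eta\le\delta+\eta$.

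For each $x\in X$ there are therefore $1+|\Jcal|$ bad events: the single coupling failure, with probability at most $\eta\le\delta+\eta$, and one wrong-side event per $J\in\Jcal$, each with probability at most $\delta+\eta$. A union bound over all $x\in X$ gives total failure probability at most $|X|(1+|\Jcal|)(\delta+\eta)<1$. Hence some fixed outcome of $Y_0,Y_1,(\lambda_x)_{x\in X}$ avoids every bad event, and fixing it yields deterministic $D_{J,b}$ and $L_i$ satisfying the identity for all $x\in X$ and $J\in\Jcal$.

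The main subtlety is the coupling failure event. We must make sure its contribution is counted once per $x$ rather than once per pair $(x,J)$, which is what produces the factor $1+|\Jcal|$ rather than $2|\Jcal|$. We must also check that the wrong-side bound uses only the marginal of $\lambda_x$, so the simultaneous coupling, including any correlation between $\lambda_x$ and $Y_0,Y_1$, does not spoil it. The degenerate case in which $f$ is constant on the relevant inputs needs no separate treatment, since the reference distributions are given.
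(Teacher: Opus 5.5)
Your proposal is correct and takes essentially the same route as the paper, which reuses the construction from the proof of \Cref{thm:intro-approx-random} (the set $\Acal$, the simultaneously coupled $\lambda_x$, the labels $L_i(x)=Z_i^{1-x_i}(\lambda_x)$, and the decoders built from $Y_0,Y_1$) and then union-bounds the coupling failures once per $x$ and the wrong-side events once per pair $(x,J)$ to get $|X|\eta+|X||\Jcal|(\delta+\eta)<1$. Your explicit checks that $\Pyes(\Acal^c),\Pno(\Acal)\le\delta$, and that the wrong-side bound depends only on the marginal of $\lambda_x$, fill in details the paper leaves implicit.
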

\begin{proof}
The proof is essentially contained in the proof of \Cref{thm:intro-approx-random} above.
Using the constructions of $D_{J,b}$'s and $L_i$'s there, we have
\begin{align*}
&\phantom{\le}\Pr\sbra{\exists x\in X,J\in\Jcal,~D_{J,f(x)}((L_i(x))_{i\in J})\ne f(x\oplus J)}\\
&\le\Pr\sbra{\text{one of the events in \Cref{itm:thm:intro-approx-random_1,itm:thm:intro-approx-random_2} occurs for some $x\in X,J\in\Jcal$}}\\
&\le|X|\cdot\eta+|X|\cdot|\Jcal|\cdot(\delta+\eta)\\
&\le|X|\cdot(1+|\Jcal|)\cdot(\delta+\eta)<1.
\end{align*}
Hence there exists a choice that establishes the statement.
\end{proof}

We can now prove \Cref{thm:approx-symmetric}.

\begin{proof}[Proof of \Cref{thm:approx-symmetric}]
We only highlight the necessary changes to the proof of \Cref{thm:intro-symmetric}.

We first argue that $f$ must have period $a_K$ between Hamming weights $B_K$ and $n-B_K$, where $a_K,B_K$ are properly chosen as in \Cref{lem:periodic-core}.
This part uses the same proof strategy of gluing short periodic intervals together.
The only change lies in the proof of \Cref{lem:local-periodicity}: we will replace \Cref{lem:local-labels} with \Cref{lem:approx-local-labels} and need only ensure \Cref{eq:lem:local-periodicity_1} for all subsets $J$ of\footnote{We use $O_K(1)$ to denote any constant depending only on $K$.} $O_K(1)$ coordinates.
The key point is that the proof of \Cref{lem:periodic-core} only needs the local intervals in \Cref{lem:local-periodicity} to have length at least $a_K+1$, so they can overlap on an interval of length $a_K$ to transfer the periodic pattern.

The next step is to embed a large OR function, witnessing the mismatch from boundary Hamming weights.
We simply use \Cref{lem:approx-block-identification} in place of \Cref{lem:block-identification} for the reduction. The key observation is that, although the \textsc{Restriction} rule in \Cref{lem:approx-block-identification} now blows up the alphabet size, we only need to restrict $O_K(1)$ input bits, since we have already established periodicity between weights $B_K$ and $n-B_K$.

Finally we need a lower bound for OR: a statistical version of \Cref{thm:general-or}.
For this purpose, we simply use \Cref{lem:approx-local-labels} to mimic the proof of \Cref{lem:anti-embedding_family} and obtain a local anti-embedding family of dimension $O_K(1)$. By ensuring that this dimension is sufficiently large, the desired lower bound follows from \Cref{lem:nonexist_anti}.
\end{proof}

\section{Tight structured lower bound for OR}\label{sec:structured-or}

This section proves \Cref{thm:intro-permutation-closed}, providing evidence toward \Cref{conj:or-optimal} under a structural assumption.

Fix a perfect DRE $Z=(Z_i^b)_{i\in[n],b\in\bin}$ of $\OR_n$ with alphabet $[K]$. Let $\Pyes$ denote the distribution of $Z^x$ for any $x\ne0^n$, and let $\Pno$ denote the distribution of $Z^{0^n}$. We say that the DRE has symmetric support on input $0^n$ if, whenever $z\in[K]^n$ lies in the support of $\Pno$, every coordinate permutation of $z$ also lies in the support.

\thmintropermutationclosed*

We first record an elementary consequence of privacy.

\begin{lemma}\label{lem:proper-marginal}
If $I\subsetneq[n]$, then $(Z_i^{y_i})_{i\in I}$ has the same distribution for every $y\in\bin^I$.
\end{lemma}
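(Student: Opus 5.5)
Since $I\subsetneq[n]$, I would pick a coordinate $j\in[n]\setminus I$ and extend any $y\in\bin^I$ to a full input $x\in\bin^n$ with $x_I=y$ and $x_j=1$. Every coordinate outside $I\cup\{j\}$ is set to $0$, say. This extension satisfies $\OR_n(x)=1$, because $x_j=1$, no matter what $y$ is. By the privacy condition of \Cref{def:intro-dre}, $Z^x$ is then distributed as $\Pyes$ for every such $x$.

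Next I would observe that $(Z_i^{y_i})_{i\in I}$ is exactly the restriction of $Z^x$ to the coordinates in $I$. This holds because $Z^x=(Z_i^{x_i})_{i\in[n]}$ and $x_i=y_i$ for $i\in I$. So its distribution is the marginal of $\Pyes$ on the coordinates $I$. That marginal does not depend on $y$, which proves the claim for every pair $y,y'\in\bin^I$.

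There is no real obstacle here. The only point needing care is that $I$ must be a proper subset: we need a free coordinate $j\notin I$ that we can set to $1$, forcing both extended inputs into the same fiber $f^{-1}(1)$ regardless of $y$. When $I=[n]$ the statement fails, since $y=0^n$ and $y=e_1$ give disjoint supports by correctness.
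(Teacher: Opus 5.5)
Your proposal is correct and is essentially the paper's proof. The paper sets every coordinate outside $I$ to $1$, while you set one free coordinate $j\notin I$ to $1$ and the rest to $0$. Either way $Z^x\sim\Pyes$, and the marginal on $I$ does not depend on $y$.
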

\begin{proof}
Extend each $y$ to some $x\in\bin^n$ by setting the remaining bits to $1$. Hence each $Z^x$ has the same distribution $\Pyes$, which remains the same by taking the marginal on $I$.
\end{proof}

For $z\in[K]^*$ and $c\in[K]$, let $\hsf_c(z)$ be the number of occurrences of $c$ in $z$, and write $\hsf(z)=(\hsf_c(z))_{c\in[K]}$ for the histogram of $z$.

\begin{proof}[Proof of \Cref{thm:intro-permutation-closed}]
We will prove a quantitative bound $K=\Omega(n^{1/3})$, which translates to the size lower bound $n\log K=\Omega(n\log n)$.

We may assume $n$ is even. Indeed, if $n$ is odd, fix the last input to zero and condition on any positive-probability value $q$ of its encoding, as in the restriction proof of \Cref{lem:block-identification}. Deleting that coordinate gives a perfect DRE of $\OR_{n-1}$ over the same alphabet. Its zero-input support is symmetric: any permutation of the remaining coordinates extends to a permutation fixing the last coordinate, so it preserves both the original zero-input support and the conditioning event. Replacing $n$ by $n-1$ does not affect the asymptotic bound.
Choose a uniformly random $S\subset[n]$ of size $n/2$ and write
$\bar S=[n]\setminus S$.
For each $c\in[K]$, let
\[
  t_c=\hsf_c((Z_i^1)_{i\in[n]}),
  \quad\text{and}\quad
  D_c=\hsf_c((Z_i^1)_{i\in S})-\hsf_c((Z_i^1)_{i\in\bar S}).
\]
A simple calculation shows, for fixed $Z$ and random $S$, we have
\[
  \E[D_c]=0,
  \qquad
  \operatorname{Var}(D_c)
  =\frac{t_c(n-t_c)}{n-1}\le t_c,
\]
which means $\E_S|D_c|\le\sqrt{t_c}$ and
$$
\E_S\vabs{\hsf((Z_i^1)_{i\in S})-\hsf((Z_i^1)_{i\in\bar S})}_1=\sum_{c\in[K]}\E_S|D_c|\le\sum_{c\in[K]}\sqrt{t_c}\le\sqrt{K\sum_c t_c}=\sqrt{Kn}.
$$
Hence
$$
\E_{Z,S}\vabs{\hsf((Z_i^1)_{i\in S})-\hsf((Z_i^1)_{i\in\bar S})}_1\le\sqrt{Kn}.
$$

For every fixed $S$, we know that $Z^{e_{\bar S}}$ and $Z^{1^n}$ have the same distribution $\Pyes$.
Hence the above analysis also shows
$$
\E_{Z,S}\vabs{\hsf((Z_i^0)_{i\in S})-\hsf((Z_i^1)_{i\in\bar S})}_1\le\sqrt{Kn}.
$$
By the triangle inequality, we have
\begin{equation}\label{eq:AB-histogram}
\E_{Z,S}\vabs{\hsf((Z_i^0)_{i\in S})-\hsf((Z_i^1)_{i\in S})}_1\le2\sqrt{Kn}.
\end{equation}

Fix $Z$ and $S$ for which the norm inside the expectation in \Cref{eq:AB-histogram} is at most $2\sqrt{Kn}$.
Construct a directed multigraph $G$ on $[K]$ with $n/2$ edges: for each $i\in S$, we add an edge from $Z_i^0$ to $Z_i^1$.
By \Cref{lem:pointwise-separation}, $G$ has no self-loops.
Since our DRE has symmetric support on input $0^n$, $G$ cannot have a directed cycle.
To see this, a directed cycle indexed by $T\subseteq S$ has equal source and target multisets, which means $Z^{e_T}$ is equivalent to $Z^{0^n}$ up to permutation.
The assumed symmetry also puts $Z^{e_T}$ in the support of $\Pno$, which contradicts the correctness property in \Cref{def:intro-dre}.

The acyclicity of $G$ defines a topological order $\rho\colon[K]\longrightarrow\{0,1,\ldots,K-1\}$ that strictly increases along every edge.
Hence
$$
  \frac n2
  \le\sum_{i\in S}\pbra{\rho(Z_i^1)-\rho(Z_i^0)}.
$$
On the other hand, since $0\le\rho(c)\le K-1$ for every $c\in[K]$,
\begin{align*}
\sum_{i\in S}(\rho(Z_i^1)-\rho(Z_i^0))
&=\sum_{c\in[K]}\rho(c)\cdot\pbra{\hsf_c((Z_i^1)_{i\in S})-\hsf_c((Z_i^0)_{i\in S})}\\
&\le K\cdot\vabs{\hsf((Z_i^1)_{i\in S})-\hsf((Z_i^0)_{i\in S})}_1\\
&\le2K^{3/2}\sqrt n.
\tag{by \Cref{eq:AB-histogram}}
\end{align*}
This gives $K=\Omega(n^{1/3})$ by rearranging.
\end{proof}

\bibliographystyle{alphaurl}
\bibliography{ref}

\end{document}